\newtheorem{proposition}{Proposition} 
\newtheorem{theorem}{Theorem}
\newtheorem{corollary}{Corollary}
\newtheorem{lemma}{Lemma}
\newcommand\givenbase[1][]{\:#1\lvert\:}
\let\given\givenbase
\DeclarePairedDelimiterX\Basics[1](){\let\given\sgiven #1}
\date{\today}
\begin{document}
	\title{ Generalized Coordinated Multipoint (GCoMP)-Enabled NOMA: Outage, Capacity, and Power Allocation
    } 
\author{Yasser Al-Eryani, Ekram Hossain, and Dong In Kim\thanks{Y. Al-Eryani and E. Hossain are with the Department of Electrical and Computer Engineering at the University of Manitoba, Canada (Emails: aleryany@myumanitoba.ca, Ekram.Hossain@umanitoba.ca). D. I. Kim is with the School of Information and Communication
Engineering at the Sungkyunkwan University (SKKU), Korea (Email: dikim@skku.ac.kr). 
The work was supported by a Discovery Grant form the Natural Sciences and Engineering Research Council of Canada (NSERC) and in part by  the National Research Foundation of Korea (NRF) grant funded by the Korean government under grants 2017R1A2B2003953 and 2014R1A5A1011478.}}
 	\maketitle
	\begin{abstract}

A novel generalized coordinated multi-point transmission (GCoMP)-enabled non-orthogonal multiple access (NOMA) scheme is proposed.
In particular, the traditional joint transmission CoMP scheme is generalized to be applied for all user-equipments (UEs), i.e. both cell-centre and cell-edge users within the coverage area of cellular base stations (BSs). Furthermore, every BS applies NOMA for all UEs associated to it using the same frequency sub-band (i.e. all UEs associated to a BS forms a single NOMA cluster).
To evaluate the proposed scheme, we derive a closed-form expression for the probability of outage for a UE with different orders of BS cooperation.
Important insights on the proposed system are extracted by deriving an approximate (asymptotic) expressions for the probability of outage and outage capacity.
Furthermore, an optimal transmission power allocation scheme that jointly allocates transmission power fractions from all cooperating BSs to all connected UEs is developed and investigated for the proposed system.
Findings show that NOMA with a large number of UEs is feasible when the GCoMP technique is used over all UEs within the network coverage area. 
Also, the performance degradation caused by a large NOMA cluster size is significantly mitigated by increasing the number of cooperating BSs.
In addition, for given feasible system parameters and a given NOMA cluster, the lower the available power budget, the higher is the number of BSs that apply NOMA for their cluster members and the lower the number of BSs that use water-filling for power allocation.  
		
	\end{abstract}
\begin{IEEEkeywords}
Non-Orthogonal Multiple Access (NOMA), CoMP (Coordinated Multipoint) Transmission, joint transmission, outage performance, successive interference cancellation (SIC), 
transmission power allocation, convex optimization.
\end{IEEEkeywords}
	\section{Introduction}
To mitigate the problems of spectrum scarcity and interference in dense cellular networks, different base stations (BSs) can cooperate with each other either by coordinating their transmission beams (joint scheduling and beamforming) or by serving the same user equipment (UE) using the same time and frequency resources (joint transmission)  \cite{CoMP6146494}. The technique of BS coordination is referred to as coordinated multipoint (CoMP) transmission \cite{CoMP2011}. 
With CoMP transmission, either multiple BSs cooperate to serve a single cell-edge UE using the same spectrum resources or they coordinate their transmissions such that inter-cell interference (ICI) is minimized \cite{6571307}.   
One major drawback of Joint Transmission CoMP (JT-CoMP) scheme is the spectrum inefficiency due to the allocation of frequency resources of multiple BSs to a single UE. 
This has limited the application of CoMP for only UEs which do not have any dominant serving BS (i.e. cell-edge UEs). 
Nevertheless, it is possible to apply BS cooperation to both cell-edge and cell-centre UEs in order to boost up the overall transmission rate.

Recently, power-domain non-orthogonal multiple access (NOMA) has emerged as a promising technology to enhance spectrum efficiency of  both uplink and downlink cellular wireless networks \cite{NOMA2006,8085125,NOMA5G}. 
With NOMA, at a certain frequency sub-band, signals of multiple users are superimposed in the power domain such that the received signal for each UE has a distinct power level. 
At the NOMA receiver's end, successive interference cancellation (SIC) is used to cancel signal components with higher weights than the desired signal (starting from the signal with the highest weight) \cite{Verdu}. 
Theoretically speaking, NOMA enhances the spectral efficiency significantly compared to that of orthogonal multiple access schemes (OMA) at the expense of receiver complexity and processing delay \cite{NOMAPER2017,8316261}.
Generally, NOMA can be applied either in the uplink or downlink of a wireless system.
In this paper, we focus on downlink NOMA. However, the concepts and analysis can easily be applied to uplink NOMA.


The `marriage' between CoMP and NOMA introduces an attractive solution to compensate for the excessive spectrum usage in CoMP \cite{CoMPNOMA2014}. 
It also makes it possible to generalize the concept of CoMP of cooperation among BSs to serve all UEs within the network instead of just cell-edge UEs. 
Furthermore, the complexity level of NOMA receivers due to the SIC requirement can be reduced through enhancing the UE's signal-to-interference-plus-noise ratio (SINR), especially when a large NOMA-cluster size (e.g. more than two UEs in a cluster) is used.

CoMP-enabled NOMA has been investigated by researchers recently \cite{Fu2017NOMA,ChoiCoMPNOMA2014,YueTianCoMPNOMA2017}. 
In \cite{Fu2017NOMA}, a distributed power allocation scheme was investigated for cooperating BSs using NOMA. 
A coordinated superposition coding scheme for a two-BS downlink network was introduced in \cite{ChoiCoMPNOMA2014}. 
In \cite{ChoiCoMPNOMA2014}, it was shown that joint transmission CoMP  with two BSs allows NOMA to provide a common cell-edge UE with a reasonable transmission rate without sacrificing the rate of cell-centre UEs. Additionally, a multi-tier NOMA strategy in CoMP network was investigated in \cite{YueTianCoMPNOMA2017} where it was shown that performance enhancement can be achieved in multi-tier CoMP-enabled NOMA networks when a proper BS selection is performed.
Additionally, limited scenarios of CoMP-enabled NOMA were studied where groups of two UEs are capable of utilizing CoMP-NOMA techniques simultaneously \cite{Shipon2018,Shipon8352618}.
Interestingly, all of the previous works in the literature have restricted the application of CoMP to UEs that are physically located at cell-edge only.
However, cooperating BSs may be used to serve all UEs in their vicinity as long as the channel quality between these BSs and targeted UEs are acceptable. 
This generalized cooperation may be translated into an enhanced  SINR per UE at the expense of decreased spectral efficiency.

\subsection{Motivations and Contributions}
Even though the concept of CoMP has been very  promising, its adoption in cooperative wireless networks gained momentum due to significant enhancements in network infrastructure (e.g. fast backhauling based on optical fiber and free-space optical links). Future wireless networks (e.g. beyond 5G/6G systems) are expected to support virtual massive multiple-input multiple-output (massive MIMO)  communication along with cloud radio access network (C-RAN) technologies \cite{CRAN}. These should enable very fast and reliable communication among different wireless BSs and allow an efficient centralized/semi-centralized baseband processing for all BSs with minimum latency. Additionally, with the proliferation of internet of things (IoT) and massive machine-type communication (mMTC) services, every wireless  device will be connected to one or more wireless access networks to be served by multiple BSs, which in turn will be connected to a general cloud network to access cloud-based services (e.g. edge-computing and caching services). 
Such a massive connectivity will require more efficient utilization of the radio spectrum to increase the number of simultaneously served UEs per frequency band with minimized complexity and interference. In this context, adoption of NOMA in future networks is motivated by its potentials to enhance the spectrum efficiency significantly along with the rapid enhancements in  antenna technologies, sophisticated signal processing and SIC algorithms. 
Motivated by the potentials of both CoMP and NOMA technologies, we propose, design and analyze a novel generalized CoMP (GCoMP)-enabled NOMA scheme that allows cooperation among distributed BSs to serve all UEs within the network coverage area\footnote{The term `GCoMP' is used to differentiate the proposed cooperative scheme from the conventional CoMP systems in which nearby BSs cooperate to serve only the cell-edge UEs.}. 
The proposed scheme efficiently utilizes both joint transmission CoMP (JT-CoMP), which is usually used to improve the signal-to-interference-plus-noise ratio (SINR) of the cell-edge UEs and thus improve fairness in the network at the cost of using more bandwidth for a UE, and NOMA, that can improve the spectral efficiency. Generalizing JT-CoMP to GCoMP improves the SINR of all users (i.e. cell-edge and cell-centre users) in the coverage area and when combined with NOMA, it has the potential to significantly improve the overall spectral efficiency  performance.
Additionally, it enables the deployment of large-scale in-band NOMA clusters with a tolerable complexity level for SIC and acceptable performance.
The main contributions of this work can be summarized as follows: 
	   \begin{enumerate}[\Huge .]
	    \item We propose a novel channel access method referred to as the GCoMP-enabled NOMA and the corresponding $n{\text{-th}}$-order clustering scheme GCoMP-NOMA, where $n$ is the number of cooperating BSs per UE. 
		\item To evaluate the proposed GCoMP-NOMA scheme, utilizing the concepts of order statistics, we derive a closed-form expression for the downlink outage probability for a UE assuming independent but non-identically distributed (i.n.d) channel gains.
		\item In order to extract important insights on the performance of the proposed scheme, we derive simplified approximate expressions for the asymptotic outage probability and the asymptotic outage capacity per UE with full-order clustering.
		\item We design a low-complexity clustering protocol for the GCoMP-enabled NOMA system and develop an optimal transmission power allocation scheme per NOMA cluster.
	\end{enumerate}
The rest of this paper is organized as follows. 
Section II presents the general system model. 
The $n{\text{-th}}$ order clustering protocol for the GCoMP-enabled NOMA is presented in Section III. Section IV presents the outage and the outage capacity performance of the proposed system. The low-complexity clustering scheme and the corresponding power allocation method for the proposed system are presented in Section V.  Numerical results are presented in Section VI, followed by conclusions in Section VII.


\section{System Model}\label{s_sys}
Consider a downlink wireless network with $K$ BSs and $M$ UEs that are located at fixed locations within a certain network's physical area as shown in Fig. \ref{SMODEL}\footnote{Studying the random locations of BSs and UEs is out the scope of this work.}. 
All BSs are connected to each other through a fast backhaul link and have the ability of collaborate with each other at the baseband and radio levels.
This is similar to a C-RAN architecture in which distributed remote radio heads (RRHs) are connected to a single mega baseband unit (BBU) that has the ability of jointly processing signals from different RRHs by dealing with distributed RRHs as a virtual $K\times M$ MIMO system \cite{CRAN}. We also assume that perfect control signaling is possible among the cooperative BSs and the distributed UEs within their coverage area due to the existence of  fast backhauling links that connect all of the cooperative BSs together.

    	\begin{figure}[!htb]
		\centering
		\includegraphics[height=8cm, width=10cm]{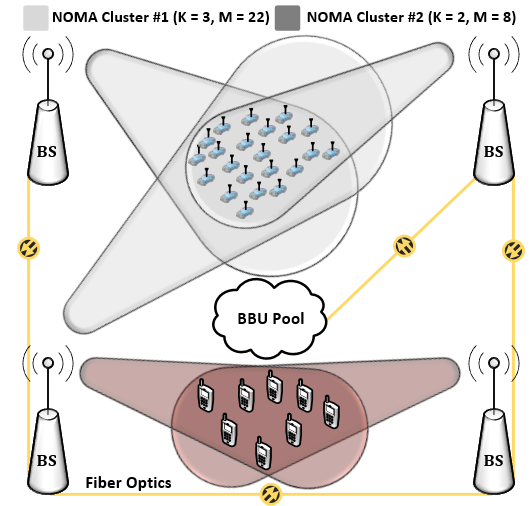}
		\caption{Example of the proposed network model. }\label{SMODEL}
	\end{figure}

For NOMA, we assume that every BS applies superposition coding in the power domain at every transmission sub-band (every sub-band has it's own power budget). 
Additionally, full channel state information (CSI) is assumed to be available at all BSs while CSI of every NOMA cluster is assumed to be available at all UEs of that cluster. A NOMA cluster refers to the set of UEs served by the same group of cooperating BSs using the same sub-band. 
The downlink channel gain between the $k{\text{-th}}$ BS and the $m{\text{-th}}$ UE is denoted by $h_{m,k}$. 
We assume that amplitudes of $h_{m,k}$ are independent but non-identically distributed (i.n.d) and follow complex Gaussian distribution, i.e. $|h_{m,k}|\thicksim  \mathcal {CG} (0, \sigma_m)$, where $\sigma_{m,k}$ is the standard deviation of $|h_{m,k}|$. 
Thus, the power gain $|h_{m,k}|^2$ follows an exponential distribution, i.e. $|h_{m,k}|^2 \thicksim  \text{Exp} (1/2\sigma_{m,k}^2)$.
At every BS, perfect SIC operation is assumed such that interference to every UE that is caused by UEs within the same NOMA cluster with lower channel gains are perfectly filtered out.   

Finally, when compared with a conventional CoMP system, we assume that the frequency reuse factor is always greater than one (e.g. 3, 7). This assumption is justified by the fact that such networks would be practical for crowded urban environments in which distributed BSs are close to each other which will cause severe interference to CoMP cell-centre UEs.

\section{Generalized CoMP-Enabled NOMA Scheme}
In this section, we present a clustering scheme for the GCoMP-enabled NOMA system which is referred to as the $n{\text{-th}}$ order clustering scheme.
\subsection{Proposal of an $n{\text{-th}}$ Order Clustering Scheme}
This is a variable order clustering scheme that enables multiple UEs to utilize a single NOMA sub-band with the help of joint cooperation among several connected BSs within their vicinity.
First, let $\Psi_{n,k}$ denote the set of indices of UEs connected to the $k{\text{-th}}$ BS throughout the same sub-band, where $n=1, \dots, K$ and $k=1, \dots, K$.
The clustering order $n$ denotes the number of BSs connected to every UE at a certain time instant. 
For example, if a $(2{\text{-nd}})$-order clustering is used, every UE within the network will be connected with two serving BSs. 
For the moment, given a certain channel gain matrix of the network $\mathbf{H} \in \mathbb C^{M\times K}$, the task is to find the set of UEs connected to the $k$-th BS under the $n$-th order clustering, $\Psi_{n,k}$.
Here, $\mathbf{H}$ is a matrix containing channel gains from all BSs to all UEs within the coverage area of the network.
\textbf{Algorithm 1} illustrates the proposed method that is used to find UEs' clusters for every BS under $n{\text{-th}}$ order clustering. 
\begin{algorithm}[H] 
\caption{: $n{\text{-th}}$ Order clustering for GCoMP-enabled $M-$user NOMA.}
\label{alg:loop1}
\begin{algorithmic}[1]
\Require{$\mathbf{H}\in \mathbb{C}^{M\times K}, n, K, M$} 
  \State {$\Psi_{i,k}=\Phi, \forall i=1, \dots, n$ and $k=1, \dots, K$}
    \For{$i=1:n$}   
        \For{$m=1:M$} 
        \State {$k^*=\max_k(\mathbf{H}(m,:))$}
        \State {$\Psi_{i,k^*}=\Psi_{i,k^*}\bigcup \{m\}$ and $ \mathbf{H}(m,k^*)=-1$}
        \EndFor
    \EndFor
    \State{$\Psi_{n,k}=\bigcup_{i=1}^{n}\Psi_{i,k}, k=1, \dots, K$}
    \State \Return {$\Psi_{n,\{1, \dots, K\}}$}
\end{algorithmic}
\end{algorithm}

Using the proposed clustering scheme, cooperation among distributed BSs is exploited for all UEs (both cell-edge and cell-centre UEs) within a certain network coverage area.
Additionally, for a certain sub-band, since every UE is a member of $n$ clusters, NOMA transmission power coefficients for every cluster must be allocated properly such that the desired received signal at a certain UE can be extracted through multilevel SIC operations.
This can be achieved by combining NOMA transmission power allocation of all BSs that are sharing the same UEs into one problem that is solved simultaneously either in a centralized or a distributed manner.
\subsection{NOMA Transmission for the Proposed Scheme}
At every NOMA sub-band, the $k{\text{-th}}$ BS will transmit $x_k=\sum_{m_k=1}^{|\Psi_{n,k}|}\sqrt{a_{m_k,k}P_k}s_{m_k,k}$, where $s_{m_k,k}$ is the message for the $m{\text{-th}}$ UE ($m_k$ is the ordered index of the $m{\text{-th}}$ UE in $\Psi_{n,k}$ and $m$ is an arbitrary index for UEs using the same sub-band within a certain network coverage area) with $\mathbb{E}\left[|s_{m_k,k}|^2\right]=1$, $P_k$ is the overall transmission power budget at the $k{\text{-th}}$ BS (assigned for a certain sub-band) and $a_{m_k,k}$ is the NOMA power allocation coefficient from the $k{\text{-th}}$ BS to the $m{\text{-th}}$ UE with index $m_k$ in $\Psi_{n,k}$ such that $\sum_{m_k=1}^{|\Psi_{n,k}|}a_{m_k,k} \leq 1$. 
Generally, the values of the coefficients $a_{m_k,k}, ~\forall~m_k=1, \dots, |\Psi_{n,k}|$ and $k=1, \dots, K$ can be derived optimally by considering joint optimization problem for all clusters within the network that uses the same sub-band.  A practical technique for optimizing the transmission power coefficients for the proposed GCoMP-enabled NOMA system will be presented later in this paper. 

First, let us define the received signal at the $m{\text{-th}}$ UE that uses a certain sub-band within the network coverage area as
\begin{equation}
y_m=\sum_{\forall~i\in \mathbb{S}_{\text{c},m}}h_{m,i}x_i
+
\sum_{\forall~j\in \mathbb{S}_{\text{nc},m}}h_{m,j}x_j+n_m,\label{Eq.1}
\end{equation}
where $n_{m}$ is the additive white Gaussian noise (AWGN) at the input of the $m{\text{-th}}$ UE with power spectral density (PSD) $N_{m}$ and $\mathbb{S}_{\text{c},m}$($\mathbb{S}_{\text{nc},m}$) is the set contains the indices of BSs serving (not-serving) the $m{\text{-th}}$ UE. 
As can be noticed from (\ref{Eq.1}), within a certain cluster set $\Psi_{n,k}$, every UE is an element of a random set of clusters (beside $\Psi_{n,k}$). 
Therefore, every UE in $\Psi_{n,k}$ will have a different inter-cell interference component that must be taken into consideration when establishing power fractions for NOMA transmission.
Therefore, let us assume that for the $k{\text{-th}}$ BS with a cluster set $\Psi_{n,k}$, the set of power gains between the $k{\text{-th}}$ BS and all UEs in $\Psi_{n,k}$ are normalized by inter-cell interference caused to every UE within the cluster and then ordered in an ascending mode, i.e. $\frac{|h_{1_k,k}|^2}{I_{\text{ICI}}^1}\leq \dots \leq \frac{|h_{m_k,k}|^2}{I_{\text{ICI}}^m} \leq \dots \frac{|h_{{|\Psi_{n,k}|_k},k}|^2}{I_{\text{ICI}}^{|\Psi_{n,k}|}}$, where $I^m_{\text{ICI}}$ is the NOMA inter-cluster interference results from the utilization of the same sub-band by other BS and is defined as $I_{\text{ICI}}^m=\sum_{\forall~i\in \mathbb{S}_{\text{nc},m}}\Phi_i |h_{m,i}|^2$ with $\Phi_i=P_i\sum_{j_i=1}^{|\Psi_{n,i}|}a_{j_i,i}$\footnote{Remember that $\sum_{j_i=1}^{|\Psi_{n,i}|}a_{j_i,i}=1$, however, we keep it in the analysis to illustrate the operations in the proposed model.}. 
Accordingly, for successful SIC operation at every UE, the transmission power allocation coefficients must be allocated such that the overall received power of the desired signal has a distinct power level from other combinations of undesired inter-NOMA interference signals (INI) with a certain power gap that depends on the sensitivity of the SIC hardware unit.
Accordingly, the design of SIC unit at every UE, which will require a multi-level SIC operation, may be challenging\footnote{While the purpose of this part of the work is to give an insight on the performance of multiuser NOMA under the proposed GCoMP-enabled NOMA model, a practical low-complexity GCoMP-enabled NOMA model will be presented along with the power allocation method later in this paper.}.

We assume that the power gains between the $m{\text{-th}}$ UE and different serving BSs are ordered such that $|h_{m,1}|^2\geq \dots \geq |h_{m,K}|^2$.
Hence, the {\rm SINR} at the input of the $m{\text{-th}}$ UE is given by
\begin{equation}
\gamma_{m}=\frac{\sum_{i=1}^n\Lambda_i^m|h_{m,i}|^2}
{
\underbrace{ \sum_{j=1}^n\Delta_j^m|h_{m,j}|^2}_{I_{\text{INI}}}+
\underbrace{ \sum_{w=n+1}^K\Phi_w|h_{m,w}|^2}_{I_{\text{ICI}}}+
\underbrace{N_m}_{\text{AWGN}}
},\label{Eq2}
\end{equation}
where $\Lambda_i^m=a_{m_i,i}P_i$, $\Delta_j^m=P_j\left(\sum_{l_j=m^*+1}^{|\Psi_{n,j}|}a_{l_j,j}+\Theta\right)$,  $\Phi_w=P_w\sum_{l_w=1}^{|\Psi_{n,w}|}a_{l_w,w}$,  $m^*=\max_k \left(m_k\right)$, $I_{\text{INI}}$ denotes the unfiltered INI component within the same NOMA cluster and $\Theta$ is the residual INI interference caused by the multi-level SIC.
The level of SIC is the number of required demodulation-subtraction operations within the SIC unit of a certain UE. 
Generally, $\Theta$ lies between $\Theta=0$ (when a certain UE has an identical ordering rank over all associated clusters) and $\Theta = \sum_{j=1}^n\left(\sum_{l_j=m_j+1}^{m_j+n}a_{l_j,j} \right)P_j|h_{m,j}|^2$ (when a certain UE has different ordering over all associated clusters).
However, the value of $\Theta$ can be set to zero with a proper sub-optimal design of the proposed system as will be discussed in the subsequent section.

Under this communication setup, the $m{\text{-th}}$ UE will detect the first $m^*-1$ signals using a first round of SIC, and then store the signal components that contain it's desired signal for a next level of SIC procedure (multi-level SIC is required only when $\Theta>0$).   
The available SINR at the input of the $m{\text{-th}}$ UE when detecting the signal component of the ${\delta}{\text{-th}}$ UE ($1\leq\delta\leq m^*$) is given by
\begin{equation}
    \gamma_{\delta\xrightarrow{}m}=\frac{\sum_{l=1}^n\Lambda_l^{\delta} \frac{ |h_{m,l}|^2}{\sum_{w=n+1}\Phi_{w}|h_{m,w}|^2+N_m}
    }
{
\sum_{j=1}^n\Delta_j^{\delta}\frac{|h_{m,j}|^2}{\sum_{w=n+1}\Phi_{w}|h_{m,w}|^2+N_m}+1
}.\label{Eq4}
\end{equation}
The difference between (\ref{Eq2}) and (\ref{Eq4}) appears at $\Lambda_j^{\delta}=a_{\delta,j}P_j$ and $\Delta_j^{\delta}=P_j\sum_{i_j={\delta}+1}^{|\Psi_{n,j}|}a_{i_j,j}$ and depends on the number of signals to be decoded and subtracted from the overall received signal. 
However, it is clear that $\gamma_{\delta\xrightarrow{}m}$ is an increasing function of $\delta$. 

\subsection{Illustrative Example}
To illustrate the main idea behind the GCoMP-enabled NOMA, we present a simple example with three BSs $(K=3)$ and eight UEs $(M=8)$ that operate under $2$-nd order clustering (i.e. $n=2$). 
Let us first assume a single realization of the channel gain matrix $\mathbf{H}$ as 
\begin{dmath}
\mathbf{H}^T=\begin{bmatrix}
 & \text{UE}_1 & \text{UE}_2 & \text{UE}_3 & \text{UE}_4 & \text{UE}_5 & \text{UE}_6 & \text{UE}_7 & \text{UE}_8 
    \\
   \text{BS}_1 & 0.9 & 1 & 0.45 & 0.7 & 0.39 & 1.2 & 0.38 & 0.89 
    \\
    \text{BS}_2 &0.1 & 0.98 & 0.35 & 0.65 & 0.93 & 0.72 & 0.91 & 0.3
    \\
    \text{BS}_3 &0.43 & 0.78 & 0.21 & 0.19 & 0.95 & 0.31 & 0.99 & 0.56
\end{bmatrix}.
\end{dmath}
Applying \textbf{Algorithm \ref{alg:loop1}} to $\mathbf{H}$, the decreasingly ordered cluster elements of $\Psi_{2,1}$, $\Psi_{2,2}$ and $\Psi_{2,3}$ is given in Table \ref{Table1}. 
\begin{table}[h!]
    \centering
      \caption{Cluster elements under $2\text{-nd}$-order clustering}
\begin{tabular}{c||c c c c c c}
\toprule[\heavyrulewidth]\toprule[\heavyrulewidth]
$\Psi_{2,1}$ & $\text{UE}_1$  & $\text{UE}_2$  & $\text{UE}_3$ & $\text{UE}_4$ & $\text{UE}_6$ & $\text{UE}_8$  \\
\hline
$\Psi_{2,2}$ & $\text{UE}_2$  & $\text{UE}_3$  & $\text{UE}_4$ & $\text{UE}_5$ & $\text{UE}_6$ & $\text{UE}_7$  \\
\hline
$\Psi_{2,3}$ & $\text{UE}_5$ & $\text{UE}_7$ & $\text{UE}_1$ & $\text{UE}_8$ & $\text{NA}$ & 
$\text{NA}$ \\  
\toprule[\heavyrulewidth]\toprule[\heavyrulewidth]
\end{tabular}
    \label{Table1}
\end{table}
For example, the received signal at $\text{UE}_1$ is given by
\begin{dmath}
y_1=\left(\sqrt{a_{1,1}P_1}h_{1,1}+\sqrt{a_{1,3}P_3}h_{1,3} \right)s_1
+
\left( \sqrt{a_{8,1}P_1}h_{1,1}+\sqrt{a_{8,3}P_3}h_{1,3}\right)s_8
+
 \sqrt{a_{2,1}P_1}h_{1,1}s_2
 +
 \sqrt{a_{3,1}P_1}h_{1,1}s_3
+
\sqrt{a_{4,1}P_1}h_{1,1}s_4
+
\sqrt{a_{5,1}P_1}h_{1,1}s_5
+
\sqrt{a_{6,1}P_1}h_{1,1}s_6
+
\sqrt{a_{7,1}P_1}h_{1,1}s_7
+n_1.
\end{dmath}
To extract $s_1$ from $y_1$, a set of SIC operations has to be conducted such that the overall power of the desired signal $s_1$ is separated by a certain SIC receiver sensitivity power gap $(P_s)$ from other undesired signal components.
The number of signals to be decoded before extracting $s_1$ will depend on the set of constraints required to extract $s_2$ through $s_8$ at $\text{UE}_2$ through $\text{UE}_8$.
To further illustrate, let us consider the case of full-order clustering at which every UE will be connected with all serving BSs in an ascending mode.
Accordingly, for the given  $\mathbf{H}$, the cluster sets are given in Table \ref{Table2}.
\begin{table}[h!]
    \centering
      \caption{Cluster sets under full order clustering}
\begin{tabular}{c||c c c c c c c c}
\toprule[\heavyrulewidth]\toprule[\heavyrulewidth]
$\Psi_{2,1}$ & $\text{UE}_1$  & $\text{UE}_2$  & $\text{UE}_3$ & $\text{UE}_4$ & $\text{UE}_6$ & $\text{UE}_8$ & $\text{UE}_5$ & $\text{UE}_7$   \\
\hline
$\Psi_{2,2}$ & $\text{UE}_2$  & $\text{UE}_3$  & $\text{UE}_4$ & $\text{UE}_5$ & $\text{UE}_6$ & $\text{UE}_7$ & $\text{UE}_1$ & $\text{UE}_8$  \\
\hline
$\Psi_{2,3}$ & $\text{UE}_5$ & $\text{UE}_7$ & $\text{UE}_1$ & $\text{UE}_8$ & $\text{UE}_2$ & $\text{UE}_3$ & $\text{UE}_4$ & $\text{UE}_6$ \\  
\toprule[\heavyrulewidth]\toprule[\heavyrulewidth]
\end{tabular}
    \label{Table2}
\end{table}
Therefore, the received signal at $\text{UE}_1$ is given by
\begin{dmath}
y_1=
\left(\sqrt{a_{1,1}P_1}h_{1,1}+
\sqrt{a_{1,2}P_2}h_{1,2}+
\sqrt{a_{1,3}P_3}h_{1,3} \right)s_1
+
\left(\sqrt{a_{2,1}P_1}h_{1,1}+
\sqrt{a_{2,2}P_2}h_{1,2}+
\sqrt{a_{2,3}P_3}h_{1,3} \right)s_2
+
\left(\sqrt{a_{3,1}P_1}h_{1,1}+
\sqrt{a_{3,2}P_2}h_{1,2}+
\sqrt{a_{3,3}P_3}h_{1,3} \right)s_3
+
\left(\sqrt{a_{4,1}P_1}h_{1,1}+
\sqrt{a_{4,2}P_2}h_{1,2}+
\sqrt{a_{4,3}P_3}h_{1,3} \right)s_4
+
\left(\sqrt{a_{5,1}P_1}h_{1,1}+
\sqrt{a_{5,2}P_2}h_{1,2}+
\sqrt{a_{5,3}P_3}h_{1,3} \right)s_5
+
\left(\sqrt{a_{6,1}P_1}h_{1,1}+
\sqrt{a_{6,2}P_2}h_{1,2}+
\sqrt{a_{6,3}P_3}h_{1,3} \right)s_6
+
\left(\sqrt{a_{7,1}P_1}h_{1,1}+
\sqrt{a_{7,2}P_2}h_{1,2}+
\sqrt{a_{7,3}P_3}h_{1,3} \right)s_7
+
\left(\sqrt{a_{8,1}P_1}h_{1,1}+
\sqrt{a_{8,2}P_2}h_{1,2}+
\sqrt{a_{8,3}P_3}h_{1,3} \right)s_8
+
n_1.
\end{dmath}
In order to be able to extract $s_1$ from $y_1$, a set of SIC operations will be conducted. 
Specifically, for successful SIC operations, a subset of the following constraints must satisfied:
\begin{dmath}\label{SICConst2}
    |\mathcal{N}_i-\mathcal{N}_j|~\geq P_s,~\forall i\neq j, i=1, \dots, 8~\mbox{and}~j=1, \dots, 8,
\end{dmath}
where $\mathcal{N}_m=a_{m,1}P_1|h_{1,1}|^2+a_{m,2}P_2|h_{1,2}|^2+a_{m,3}P_3|h_{1,3}|^2$.
Note that a set of different constraints has to be satisfied at every UE and transmission power for NOMA must be allocated such that all constraints at all UEs are satisfied.
Furthermore, the number of SIC operations required at every UE will depend on it's overall power level (weight) compared to other distinct signals (the more the power the fewer will be the number of required SIC operations).

For simplicity of analysis, we adopt a constant power allocation scheme for NOMA under the assumption of the availability of perfect SIC for the proposed model.
In particular, the $k$-th BS applies a constant power allocation for its own cluster members served through NOMA based on the following relation:
\begin{dmath}
    a_{i_k,k}=
    \left\{
\begin{array}{cc}
\frac{1}{2^i}     &  i=1, \dots, |\Psi_{n,k}|-1,\\
    a_{|\Psi_{n,k}|-1,k} & i=|\Psi_{n,k}|.
\end{array}    
    \right.
\end{dmath}
Note that for this constant power allocation scheme, the condition $\sum_{i_k=1}^{|\Psi_{n,k}|}a_{i_k,k}=1$ holds. 
Besides, the notion of allocating higher power fractions to NOMA UEs with weaker channel gains is also satisfied.
Note that, the optimal power allocation scheme for a modified practical GCoMP-enabled NOMA system will be presented in Section \ref{PASection}.
\section{Outage and Capacity Performance of the Proposed Scheme}
In this section, a closed-form expression for the outage probability for a UE under the GCoMP-enabled NOMA scheme is derived. 
Generally, the $m{\text{-th}}$ UE will be in outage if it fails to successfully decode at least one of the $m$ higher weight signal components.
This can be mathematically expressed as\footnote{Here, we have assumed that the higher levels of SIC operations are conducted ideally such that $\Theta$ is set to zero. 
This is an assumption that complies with the practical design of the proposed system as will be shown later in this paper.}
\begin{dmath}
    \text{P}^{m}_{\text{out}}=1-\text{P}\left( \bigcap_{\delta=1}^{m} \text{E}_{{\delta}\xrightarrow{}m}^{\text c}  \right),\label{Eq.5}
\end{dmath}
where $\text E_{\delta\xrightarrow{}m}$ is  the event that the $m{\text{-th}}$ UE has failed to decode the $\delta{\text{-th}}$ signal component under a certain performance requirements and
$\text E^{\text c}_{\delta\xrightarrow{}m}$ is the complement of $\text E_{\delta\xrightarrow{}m}$.
This can be mathematically expressed as
\begin{dmath}
    \text{E}_{\delta\xrightarrow{}m}\overset{\Delta}=
    \left\{
\begin{array}{cc}
\text P \left( 
    \gamma_{\delta\xrightarrow{}m}\leq \gamma_{\text{th}}^{\delta}
    \right)     &  \delta=1,\\
    \text P \left( 
    \gamma_{\delta\xrightarrow{}m}\leq \gamma_{\text{th}}^{\delta}
    \given \gamma_{{\delta-1}\xrightarrow{}m}\leq \gamma_{\text{th}}^{\delta-1}
    \right) & \delta>1 ,
\end{array}    
    \right.
\end{dmath}
where $\gamma_{\text{th}}^{\delta}=2^{\bar R_{\delta}}-1$. The value $\bar R_{\delta}$ is the minimum transmission rate required by the ${\delta}{\text{-th}}$ UE (assuming that every UE has identical ordering over all connected clusters).
For simplicity of analysis, we will assume that all UEs have the same rate requirements, i.e. $\gamma_{\text{th}}^{\delta}=\gamma_{\text{th}}$.
Deriving a closed-form expression for (\ref{Eq.5}) is possible, however, the final expression is found to be complicated and difficult to be programmed. 
To simplify the analysis, we first calculate the average inter-cell interference at the $m{\text{-th}}$ UE and substitute it into (\ref{Eq4}).
\textbf{Theorem 1} defines the average value of $I_{\text{ICI}}^m$ of the proposed GCoMP-enabled NOMA scheme.
\begin{theorem}
   The average inter-cell interference at the $m{\text{-th}}$ UE under the GCoMP-enabled NOMA scheme is given by
\begin{dmath}
    \bar I^m_{\text{ICI}}=\sum_{i_{n+1}, \dots, i_{K}}^{\{1,2, \dots, K\}}\prod_{l=n+1}^K
    \frac{\lambda_{m,i_l}}{\sum_{d=n+1}^l\lambda_{m,i_d}+\sum_{q=1}^n\lambda_{m,i_q}}
    \left(\sum_{w=n+1}^{K}\frac{w}{\sum_{q=n+1}^w \lambda_{m,i_{q}}+\sum_{q=1}^n\lambda_{m,i_q}} \right),\label{IFainal}
\end{dmath}   
where $\lambda_{m,w}=1/2\Phi_w\sigma_{m,w}$ and $\{i_{n+1}, \dots, i_{K}\}$ are distinct indices that take values from $\{1, \dots, K\}$.\label{InterAve}
\end{theorem}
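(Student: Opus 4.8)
The plan is to read $I_{\text{ICI}}^m=\sum_{w=n+1}^K\Phi_w|h_{m,w}|^2$ as the sum of the $K-n$ smallest members of the scaled-gain collection $Z_k:=\Phi_k|h_{m,k}|^2$, $k=1,\dots,K$. Because the $|h_{m,k}|^2$ are independent exponentials and scaling an exponential by a positive constant leaves it exponential, each $Z_k$ is exponential with the rate $\lambda_{m,k}$ of the statement; the $n$ BSs with the strongest channels are the serving terms while the remaining $K-n$ weakest form the interference. First I would therefore reduce the claim to evaluating $\mathbb{E}\!\left[\sum_{j=1}^{K-n}V_{(j)}\right]$, where $V_{(1)}<\cdots<V_{(K)}$ are the order statistics of independent but non-identically distributed exponentials, and attack it through the law of total expectation by conditioning on the realized ordering of the $Z_k$.

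Two consequences of the memoryless property would then do the work. The first is that the probability that the variables fire in a prescribed increasing order equals the telescoping product of each rate over the sum of the rates of all not-yet-fired variables; summing this over all internal orderings of the $K-n$ interferers, while leaving the $n$ serving variables unordered, yields precisely the factor $\prod_{l=n+1}^K\lambda_{m,i_l}\big/\big(\sum_{q=1}^n\lambda_{m,i_q}+\sum_{d=n+1}^l\lambda_{m,i_d}\big)$, in which the serving set enters only through the constant partial sum $\sum_{q=1}^n\lambda_{m,i_q}$. The second is that, conditioned on the ordering, the successive spacings $V_{(r)}-V_{(r-1)}$ are independent exponentials with rate $S_r$ equal to the total rate of the variables that have not yet fired, so that $\mathbb{E}[V_{(j)}]=\sum_{r=1}^j 1/S_r$ with $S_r=\sum_{q=1}^n\lambda_{m,i_q}+\sum_{d=n+1}^{K-r+1}\lambda_{m,i_d}$.

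I would then form $\sum_{j=1}^{K-n}\mathbb{E}[V_{(j)}]$, interchange the two summations to collect each term $1/S_r$ with its multiplicity $K-n-r+1$, and re-index through $w=K-r+1$ so that the denominator becomes $\sum_{q=1}^n\lambda_{m,i_q}+\sum_{q=n+1}^w\lambda_{m,i_q}$; this reproduces the bracketed factor of (\ref{IFainal}). Multiplying this conditional mean by the ordering probability of the previous step and summing over every distinct tuple $\{i_{n+1},\dots,i_K\}$, i.e. applying total expectation, assembles the claimed closed form.

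The main obstacle I anticipate is the order-statistics bookkeeping: keeping the physical decreasing ordering (strongest channels serve, weakest interfere) consistent with the increasing firing order used in the memoryless argument, and checking that interchanging the summations reproduces exactly the numerator multiplicity in the bracket. A secondary subtlety is that the selection of the serving set is naturally performed on the raw gains $|h_{m,k}|^2$, whereas the rates $\lambda_{m,k}$ belong to the scaled $Z_k$; I would make explicit the convention that the ordering is taken on the scaled quantities, so that the exponential order-statistics identities apply verbatim.
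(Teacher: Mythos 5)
Your proposal is correct and follows essentially the same route as the paper's Appendix A: the ordering probability you derive from the memoryless property is exactly the telescoping product in (\ref{IFainal}), and your conditional-spacings argument is the probabilistic reading of the paper's Sukhatme change of variables, which likewise turns the sum of the $K-n$ smallest scaled gains into $\sum_{w} w\,x_w$ with independent exponential spacings. The one discrepancy is cosmetic and in your favor: your careful re-indexing produces numerator $w-n$ in the bracketed sum, which is what $\sum_{w=1}^{K-n} w\,\mathbb{E}[x_w]$ actually yields after shifting the index to run from $n+1$ to $K$, so the bare $w$ in the theorem's final display appears to be an index-shift slip rather than a gap in your argument.
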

\begin{proof}
See \textbf{Appendix A}.
\end{proof}
\textbf{Remark:} \textbf{Theorem \ref{InterAve}}  considers only the ordering of BSs with respect to (w.r.t) any arbitrary UE at $\Psi_{n,\{1,\dots,K\}}$. 
This is true since $\bar I_{\text{ICI}}^m$ is constant for every UE even when the $m{\text{-th}}$ UE is decoding the $\delta{\text{-th}}$ UE's signal $ (\delta<m^*$).

Substituting $\bar I_{\text{ICI}}^m$ into (\ref{Eq4}), the outage probability of (\ref{Eq.5}) can be expressed as
\begin{equation}
    P_{\text{out}}^{m}=\text{P}\left(\sum_{i=1}^n z_{m,i}\leq \frac{\gamma_{\text{th}}P}{\max_{\delta}\left(\Lambda^{\delta}-\gamma_{\text{th}}\Delta^{\delta} \right)} \right),\label{out1}
\end{equation}
where $z_{m,i}\thicksim \text{Exp}(\alpha_{m,i})$, $\alpha_{m,i}=\left(\bar{I}_{\text{inter}}^m+N_m\right)/2P\sigma^2_{m,i}$ and we have assumed that all BSs are transmitting using the same maximum transmission power budget, i.e. $P_k=P, ~\forall~k=1, \dots, K$. 
Additionally, the maximum value of $\left(\Lambda^{\delta}-\gamma_{\text{th}}\Delta^{\delta} \right)$ changes at every time slot based on the instantaneous CSI and the  NOMA transmission power allocation method.
Accordingly, the probability of outage of the $m$-th UE under the GCoMP-enabled NOMA scheme is defined in \textbf{Theorem 2}.
\begin{theorem}
   The probability of outage of the $m{\text{-th}}$ UE under the $(n{\text{-th}})$-order clustering of the GCoMP-enabled NOMA scheme is given by
   
   \begin{dmath}
       \text{P}_{\text{out}}^{(n)}= \sum_{m=1}^{\Omega} \sum_{\textit{S}_m}\prod_{l=1}^m F^{(n)}_{\gamma_{\kappa_l}}\left(\gamma_{\text{th}}^'\right)\prod_{q=m+1}^{\Omega}\left[1-F^{(n)}_{\gamma_{\kappa_q}}\left(\gamma_{\text{th}}^'\right) \right],\label{OutFinal}
   \end{dmath}
   where the summation extends over all permutations $(\kappa_1, \dots, \kappa_{\Omega})$ of $1, \dots, \Omega$ for which $\kappa_1<\dots<\kappa_m$, $\kappa_{m+1}<\dots<\kappa_{\Omega}$ and $\Omega=\max_k |\Psi_{n,k}|$ such that $\{\Psi_{n,k}\big | m\in \Psi_{n,k}\}$. 
   The CDF $F^{(n)}_{\gamma_{\kappa}}(\gamma)$ is given by
   \begin{dmath}
    F^{(n)}_{{\gamma}_{\kappa}}(\gamma)=
    \sum_{i_{1}, \dots, i_{n}}^{\{1,2, \dots, K\}}J_1({\kappa},i)
    \left[
    \sum_{t_1=1}^n\frac {\eta_{t_1}^{\kappa}}{\rho_{t_1}^{\kappa}}\left(1-e^{-\rho^{\kappa}_{t_1}\gamma}\right)
    -
    \sum_{h_1=1}^{K-n}(-1)^{h_1}\sum^{\{n,\dots,K\}}_{j_1\leq \dots \leq j_{h_1}}
    \sum_{t_2=1}^n\frac
    {\eta_{t_2}^{\kappa}}{\rho_{t_2}^{\kappa}}\left(1-e^{-\rho_{t_2}^{\kappa}\gamma}\right) 
    \right],\label{PDFF}
\end{dmath}
where $J_1(\kappa,i)=\left( \prod_{q=1}^n \frac{\alpha_{\kappa,i_q}}{q}\right)$, $\gamma_{\text{th}}^'=\frac{ \gamma_{\text{th}}P}{\max_{\delta}\left(\Lambda^{\delta}-\gamma_{\text{th}}\Delta^{\delta} \right)}$ and $\eta_{t_1}^{\kappa}$, $\rho_{t_1}^{\kappa}$, $\eta_{t_2}^{\kappa}$ and $\rho_{t_2}^{\kappa}$ are defined in Appendix B.\label{Theorem2}
\end{theorem}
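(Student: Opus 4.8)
The plan is to separate the claim into two layers. The first is to obtain the per-UE conditional CDF $F^{(n)}_{\gamma_\kappa}(\gamma)$ of the combined post-SIC SINR as the distribution of a sum of the $n$ strongest (ordered and selected) i.n.d. exponential power gains, which yields (\ref{PDFF}). The second is to assemble the full outage event of (\ref{out1}) across the SIC decoding chain by means of an order-statistics identity for independent but non-identically distributed variables, which yields (\ref{OutFinal}).

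For the first layer I would start from (\ref{out1}), where the outage of a single decoding step is $\mathrm{P}(\sum_{i=1}^n z_{m,i}\le\gamma_{\mathrm{th}}')$ with $z_{m,i}\sim\mathrm{Exp}(\alpha_{m,i})$. Because the $n$ serving BSs are the top-$n$ entries of the ordering $|h_{m,1}|^2\ge\dots\ge|h_{m,K}|^2$, the $z_{m,i}$ are not plain exponentials but the $n$ largest order statistics of $K$ i.n.d. exponentials. I would therefore (i) enumerate which indices $(i_1,\dots,i_n)$ occupy the top-$n$ slots, which produces the outer sum $\sum_{i_1,\dots,i_n}^{\{1,\dots,K\}}$ and, after normalising the joint order-statistic density, the prefactor $J_1(\kappa,i)=\prod_{q=1}^n \alpha_{\kappa,i_q}/q$; (ii) enforce that the remaining $K-n$ gains are smaller by inclusion--exclusion, which yields the alternating sum $\sum_{h_1}(-1)^{h_1}$ over the non-serving index subsets $j_1\le\dots\le j_{h_1}$; and (iii) invert the resulting sum of i.n.d. exponentials (a hypoexponential) by partial fractions, giving the $(1-e^{-\rho\gamma})$ terms with the residue weights $\eta^\kappa_{t}/\rho^\kappa_{t}$ defined in Appendix B. Collecting these three steps reproduces (\ref{PDFF}).

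For the second layer I would use the fact from (\ref{Eq.5}) that the target UE is in outage iff at least one of its SIC decoding steps fails, and that after substituting the average ICI $\bar I^m_{\mathrm{ICI}}$ of \textbf{Theorem~\ref{InterAve}} the per-step SINRs share the common threshold $\gamma_{\mathrm{th}}'$ and become (conditionally) independent. Writing $\mathrm{P}(\text{at least one failure})=\sum_{m=1}^{\Omega}\mathrm{P}(\text{exactly }m\text{ steps fail})$ over the $\Omega=\max_k|\Psi_{n,k}|$ relevant steps, and expressing each ``exactly $m$'' term as a sum over size-$m$ index subsets written in canonical increasing order ($\kappa_1<\dots<\kappa_m$, $\kappa_{m+1}<\dots<\kappa_\Omega$) to avoid double counting, gives precisely the Bapat--Beg / David--Nagaraja order-statistics representation in (\ref{OutFinal}).

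The hard part will be the first layer: coupling the top-$n$ order-statistic selection with the hypoexponential inversion, i.e. simultaneously tracking the combinatorial selection of serving BSs, the inclusion--exclusion over the $K-n$ weaker BSs, and the partial-fraction residues, while ensuring the rates $\rho$ are distinct so that the inversion is valid. A secondary subtlety is justifying the independence invoked in the second layer: the $\Omega$ decoding SINRs share the same underlying channel gains and are only decoupled once the average-ICI substitution of \textbf{Theorem~\ref{InterAve}} is made, so I would state this conditioning explicitly as the step that legitimises the order-statistics product form.
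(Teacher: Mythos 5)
Your first layer is essentially the paper's own route: the paper writes the joint density of the ordered i.n.d.\ exponentials with the top-$n$ indices serving and the remaining $K-n$ entering through CDF factors (whose expansion is exactly your inclusion--exclusion sum $\sum_{h_1}(-1)^{h_1}$), decouples the ordered variables via the Sukhatme spacing transformation so that $Z_m$ becomes a hypoexponential with rates $\rho^m_{t}=\sum_{r\le t}\alpha_{m,i_r}/t$, and then inverts the MGF by partial fractions --- precisely your steps (i)--(iii), including the caveat that the $\rho$'s must be distinct.

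Your second layer, however, misattributes the source of the outer combinatorial sum, and the independence you invoke there does not hold. The $\Omega$ variables to which the paper applies the i.n.d.\ order-statistics CDF (\cite[Eq.~5.2.1]{OrderStcs2005}) are the per-UE SINR variables $\gamma_{\kappa}$, $\kappa=1,\dots,\Omega$, of the $\Omega$ cluster members; the formula in (\ref{OutFinal}) is the CDF of the $m$-th order statistic of that collection, reflecting the rank of the tagged UE \emph{within its cluster}, not a decomposition of ``at least one SIC step fails'' into exactly-$m$-failure events. Your reading cannot be repaired by conditioning: the decoding SINRs $\gamma_{\delta\to m}$ for $\delta=1,\dots,m^{*}$ are deterministic functions of the \emph{same} serving gains $|h_{m,i}|^2$ (averaging the ICI via \textbf{Theorem~\ref{InterAve}} removes only the interference randomness), so they are maximally dependent rather than independent, and there are $m^{*}$ of them rather than $\Omega=\max_k|\Psi_{n,k}|$. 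The paper handles the SIC chain differently and earlier: since $\gamma_{\delta\to m}$ is monotone in $\delta$, the intersection of the per-step conditions in (\ref{Eq.5}) collapses into the single event (\ref{out1}) on $\sum_{i=1}^{n}z_{m,i}$ with the worst-case threshold $\gamma_{\text{th}}'=\gamma_{\text{th}}P/\max_{\delta}(\Lambda^{\delta}-\gamma_{\text{th}}\Delta^{\delta})$; only after that collapse is the order-statistics identity applied, to the cluster ordering. Without this step your argument either double-counts the SIC events or multiplies probabilities of events that are not independent.
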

\begin{proof}
See \textbf{Appendix B}.
\end{proof}
\textbf{Remark:}
The effect of cooperation among BSs within GCoMP-enabled NOMA scheme appears in terms of the increased SINR and decreased $I_{\text{ICI}}^m$ per UE.
However, a better performance enhancement can be done by optimizing NOMA coefficients for all BSs simultaneously as one matrix, as will be discussed in a subsequent section.

One particular case of significant importance is that when all BSs within a certain geographical area cooperate to serve a set of UEs in their vicinity using the same sub-band.
This can be considered as a full-order clustering of the proposed scheme, i.e. $n=K$. 
\textbf{Corollary 1} presents a simpler expression of this particular scenario under the assumption of independent and identically distributed (i.i.d) channel gains. 
\begin{corollary}
Under full-order clustering ($n = K$) with i.i.d channel gains, the probability of outage of the GCoMP-enabled NOMA scheme given in \textbf{Theorem \ref{Theorem2}} reduces to
\begin{dmath}    P_{\text{out}}^{(K)}=\sum_{m=1}^{M} {M \choose m}\gamma(K,\alpha \gamma_{\text{th}^'})^m\left(
    1-\gamma(K,\alpha \gamma_{\text{th}}^')
    \right)^{Q},\label{POUTFULL}
\end{dmath}
where  $\alpha=N/2P\sigma^2$~$(\bar{I}_{\text{ICI}}=0)$, $Q=M-m$, $\gamma(x,y)$ is the normalized lower incomplete gamma function \cite[Eq. 6.5.2]{1965} and $\Gamma(x)$ is the gamma function.
\end{corollary}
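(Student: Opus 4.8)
The plan is to specialize the general expression of Theorem~\ref{Theorem2} to the full-order regime $n=K$ and then exploit the symmetry of i.i.d.\ channels to collapse both the inner CDF and the outer order-statistics sum into the compact binomial form of (\ref{POUTFULL}).

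First I would record the two structural consequences of setting $n=K$. Under full-order clustering every UE is served by all $K$ BSs, so the non-serving set $\mathbb{S}_{\text{nc},m}$ is empty; consequently the inter-cell interference vanishes, $I_{\text{ICI}}^m=\bar I_{\text{ICI}}^m=0$, and the exponential rate appearing in (\ref{out1}) collapses to $\alpha_{m,i}=N_m/2P\sigma_{m,i}^2$, which under i.i.d.\ channels is the single common value $\alpha=N/2P\sigma^2$. Moreover, since every UE now belongs to every cluster, $|\Psi_{K,k}|=M$ for all $k$, so the order-statistics length is $\Omega=\max_k|\Psi_{K,k}|=M$. This already pins down the range of the outer sum and the argument $Q=M-m$.

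Next I would evaluate the per-signal CDF $F^{(K)}_{\gamma_{\kappa}}$. Rather than pushing $n\to K$ inside the partial-fraction expression (\ref{PDFF}), I would return to the clean event in (\ref{out1}), namely $\bigl\{\sum_{i=1}^{K}z_{m,i}\le\gamma_{\text{th}}'\bigr\}$ with the $z_{m,i}$ i.i.d.\ $\mathrm{Exp}(\alpha)$. A sum of $K$ i.i.d.\ exponentials is Erlang$(K,\alpha)$, whose CDF is exactly the regularized lower incomplete gamma function, so $F^{(K)}_{\gamma_{\kappa}}(\gamma_{\text{th}}')=\gamma(K,\alpha\gamma_{\text{th}}')$ for every index $\kappa$. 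As a consistency check, the correction sum $\sum_{h_1=1}^{K-n}$ in (\ref{PDFF}) is empty once $n=K$, so only the leading $\sum_{t_1}(\eta/\rho)(1-e^{-\rho\gamma})$ term survives, and this is precisely the distinct-rate expansion of the same Erlang CDF. Substituting the common $F:=\gamma(K,\alpha\gamma_{\text{th}}')$ into (\ref{OutFinal}) then makes each summand equal to $F^m(1-F)^{M-m}$, independent of the particular permutation, and the inner sum $\sum_{S_m}$ reduces to counting how many ways the increasing block $\kappa_1<\dots<\kappa_m$ can be chosen from $M$ indices (its complement being automatically fixed), which is $\binom{M}{m}$. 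Collecting terms reproduces (\ref{POUTFULL}).

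The hard part will be Step three's passage through the inner CDF: the partial-fraction representation in (\ref{PDFF}) is built on \emph{distinct} exponential rates and becomes degenerate the moment the channel statistics coincide, so I expect the cleanest justification to bypass (\ref{PDFF}) entirely and read the Erlang CDF directly from (\ref{out1}), treating the vanishing of $\bar I_{\text{ICI}}^m$ as the mechanism that removes all the index-dependence. Once that identification is secured, the remaining symmetrization of the order-statistics sum into binomial coefficients is routine.
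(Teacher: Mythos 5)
Your proof is correct, and it reaches the stated formula; the paper itself only says the corollary follows by ``repeating the same procedure of Theorem~\ref{Theorem2} under the given assumptions,'' so you have supplied the details it omits. The one place where you genuinely depart from the paper's prescribed route is the inner CDF: rather than specializing the partial-fraction expression (\ref{PDFF}) to $n=K$ and i.i.d.\ gains, you read the Erlang$(K,\alpha)$ CDF directly off the event in (\ref{out1}). This is not just a shortcut --- it is arguably the \emph{necessary} route, because the residues $\eta_{t}^{m}=\prod_{k\neq t}(\rho_{k}^{m}-\rho_{t}^{m})^{-1}$ in (\ref{PDFF}) are defined only for distinct rates (the paper itself notes the expansion ``is valid only under the assumption that all $\rho_{t_1}^m$ are distinct''), so a literal substitution of identical statistics into Theorem~\ref{Theorem2} produces indeterminate $0/0$ terms; one would have to take a limit of coalescing poles, whereas the Erlang identification gives $F^{(K)}_{\gamma_{\kappa}}(\gamma_{\text{th}}')=\gamma(K,\alpha\gamma_{\text{th}}')$ in one line. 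Your handling of the outer order-statistics sum (all permutation summands become equal, so $\sum_{S_m}$ contributes $\binom{M}{m}$, with $\Omega=M$ since every UE belongs to every cluster under $n=K$) matches what the paper intends. In short: same skeleton as the paper, but your replacement of the degenerate partial-fraction step by the direct Erlang argument is the more rigorous justification of the two.
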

\begin{proof}
This Corollary can easily be proven by repeating the same procedure of \textbf{Theorem \ref{Theorem2}} under the given assumptions.
\end{proof}

Furthermore, to get more insights about the performance of the proposed system in terms of the diversity order and coding gain, \textbf{Corollary \ref{Corollary2.2}} presents an approximate expression of the probability of outage of the $m$-th UE under full-order clustering at the high SINR regime.  
\begin{corollary}
Under full-order clustering with i.i.d channel gains and high SINR regime  $(\bar \gamma \xrightarrow{}\infty)$, the probability of outage of the GCoMP-enabled NOMA scheme can be expressed as
\begin{equation}
P_{\text{out}}^{(K)}\approx \left(\left(\frac{K\Gamma(K)}{M(\gamma_{\text{th}}^')^K}\right)^{1/K} \bar \gamma \right)^{-K},
\end{equation}
where $\bar \gamma=2P\sigma^2/N$.\label{Corollary2.2}
\end{corollary}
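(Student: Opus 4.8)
The plan is to take the exact full-order outage expression of \textbf{Corollary 1} as the starting point and extract its leading-order behaviour as $\bar\gamma\to\infty$. First I would note that in the i.i.d.\ full-order case $\bar I_{\text{ICI}}=0$, so the exponential rate collapses to $\alpha=N/2P\sigma^2=1/\bar\gamma$; hence the high-SINR limit $\bar\gamma\to\infty$ is \emph{exactly} the small-argument limit $\alpha\to0$. The task thus reduces to expanding
\[
P_{\text{out}}^{(K)}=\sum_{m=1}^{M}\binom{M}{m}\gamma\!\left(K,\alpha\gamma_{\text{th}}'\right)^{m}\left(1-\gamma\!\left(K,\alpha\gamma_{\text{th}}'\right)\right)^{M-m}
\]
for small $\alpha$, where $\gamma(K,\cdot)$ denotes the normalized lower incomplete gamma function.

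Next I would invoke the small-argument form $\gamma(K,x)=\frac{1}{\Gamma(K)}\int_0^x t^{K-1}e^{-t}\,dt\approx\frac{x^{K}}{K\Gamma(K)}$ as $x\to0$, obtained by replacing $e^{-t}$ with $1$ in the integrand (equivalently, retaining the leading term of the incomplete-gamma series in \cite{1965}). Each factor $\gamma(K,\alpha\gamma_{\text{th}}')^{m}$ is then of order $\alpha^{Km}$, so within the sum the $m=1$ term decays slowest and dominates all higher-$m$ terms by a factor $O(\alpha^{K})$, while the complementary factor $(1-\gamma(K,\alpha\gamma_{\text{th}}'))^{M-m}\to1$. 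Retaining only the dominant contribution (with $\binom{M}{1}=M$) yields
\[
P_{\text{out}}^{(K)}\approx M\,\frac{(\alpha\gamma_{\text{th}}')^{K}}{K\Gamma(K)}.
\]

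Finally I would substitute $\alpha=1/\bar\gamma$ and recast the result as a single power of $\bar\gamma$,
\[
P_{\text{out}}^{(K)}\approx\frac{M(\gamma_{\text{th}}')^{K}}{K\Gamma(K)}\,\bar\gamma^{-K}
=\left(\left(\frac{K\Gamma(K)}{M(\gamma_{\text{th}}')^{K}}\right)^{1/K}\bar\gamma\right)^{-K},
\]
which is the claimed expression and immediately exhibits a diversity order of $K$ (the exponent of $\bar\gamma$), with the bracketed quantity acting as the coding gain. The main obstacle is the asymptotic bookkeeping in the second step: one must justify rigorously that the $m=1$ term controls the sum, i.e.\ that the discarded $m\ge2$ terms together with the higher-order corrections from both the gamma-function expansion and from $(1-\gamma)^{M-m}$ are genuinely $o(\bar\gamma^{-K})$, rather than merely asserting dominance. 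This is where care with the remainder of the incomplete-gamma series is required.
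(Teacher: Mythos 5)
Your proposal is correct and follows essentially the same route as the paper: substitute the leading term of the series representation of the normalized lower incomplete gamma function into the exact expression of \textbf{Corollary 1} and retain the dominant ($m=1$) contribution, then rewrite the result in the coding-gain/diversity-order form. Your treatment is in fact slightly more explicit than the paper's one-line argument about why the $m=1$ term and the factor $\bigl(1-\gamma(K,\alpha\gamma_{\text{th}}')\bigr)^{M-m}\to 1$ control the asymptotics, but the underlying idea is identical.
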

\begin{proof}
This can be proven by utilizing the series representation of $\gamma\left(K,\alpha \gamma_{\text{th}}^' \right)$ in \cite[Eq. 6.5.29]{1965}, substituting in (\ref{POUTFULL}), and then taking the first term (the dominant term). 
\end{proof}
\textbf{Remarks:} 
\begin{itemize}
    \item At high SINR,  $P_{\text{out}}^{(K)}\approx(G_c \bar{\gamma})^{-G_d}$, where $G_c$ and $G_d$ are  the coding gain and  diversity order, respectively  \cite{Giannakis2003tcom}.
    Accordingly, under full-order clustering we have $G_c=\left({K\Gamma(K)}/{M(\gamma_{\text{th}}^')^K}\right)^{1/K}$ and $G_d=K$.
    \item It can be noticed from \textbf{Corollary \ref{Corollary2.2}} that, besides $\gamma_{\text{th}}^'$, the outage performance of the system is significantly affected by the number of NOMA UEs used per cluster ($M$) in terms of the decreased coding gain.
    This negative impact caused by increasing the NOMA cluster size can be significantly annihilated by increasing K (when $K\xrightarrow{}\infty$, $G_c$ approaches 1).
\end{itemize}

\textbf{Outage Capacity:} We evaluate the achievable transmission rate per UE of the proposed GCoMP-enabled NOMA. 
However, a closed-form expression of the ergodic capacity of the proposed system is found to be very complicated and does not carry any significant insights.
Nevertheless, we evaluate the so called $\epsilon$-\textit{outage capacity} under the high SINR regime, where $\epsilon$ is the maximum allowable outage to achieve a capacity of $C_{\epsilon}$ \cite{goldsmith_2005}. 

\begin{proposition}
Under full-order clustering with i.i.d channel gains and high SINR regime $(\bar \gamma \xrightarrow{}\infty)$, the $\epsilon$-\textit{outage capacity} for a UE under GCoMP-enabled NOMA is approximated by
\begin{equation}
    C_{\epsilon}\approx \log_2\left(1+\sqrt[K]{\frac{\epsilon K \Gamma(K)}{M}}\bar \gamma \right).\label{Capacity}
\end{equation}
\end{proposition}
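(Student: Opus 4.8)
The plan is to obtain $C_\epsilon$ by inverting the asymptotic outage expression of \textbf{Corollary \ref{Corollary2.2}}. By the definition of the $\epsilon$-outage capacity, $C_\epsilon$ is the largest transmission rate whose associated outage probability equals $\epsilon$. Since, after the average-ICI substitution, the effective received signal in (\ref{out1}) is $\sum_{i=1}^{K}z_{m,i}$ and outage occurs exactly when this quantity drops below the effective threshold $\gamma_{\text{th}}'$, the rate supported at the outage boundary is $C_\epsilon=\log_2\!\left(1+\gamma_{\text{th}}'\right)$. Hence it suffices to determine the value of $\gamma_{\text{th}}'$ for which the outage equals $\epsilon$ and substitute it back.

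First I would rewrite the bound of \textbf{Corollary \ref{Corollary2.2}} in the transparent form
\begin{equation*}
P_{\text{out}}^{(K)}\approx\frac{M\,(\gamma_{\text{th}}')^{K}}{K\,\Gamma(K)\,\bar\gamma^{K}},
\end{equation*}
which follows directly by expanding the outer exponent $(\cdot)^{-K}$ and cancelling the $1/K$ power inside. The key structural fact is that, in this high-SINR regime, $P_{\text{out}}^{(K)}$ is strictly increasing in $\gamma_{\text{th}}'$, so the equation $P_{\text{out}}^{(K)}=\epsilon$ possesses a unique solution, which is precisely the effective threshold achieving the target outage.

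Next I would set the right-hand side equal to $\epsilon$, clear denominators to obtain $(\gamma_{\text{th}}')^{K}=\epsilon\,K\,\Gamma(K)\,\bar\gamma^{K}/M$, and take the $K$-th root, yielding $\gamma_{\text{th}}'=\sqrt[K]{\epsilon K\Gamma(K)/M}\,\bar\gamma$. Substituting this into $C_\epsilon=\log_2(1+\gamma_{\text{th}}')$ reproduces (\ref{Capacity}).

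The algebra (root extraction and substitution) is immediate; the only genuinely delicate step is the conceptual identification $C_\epsilon=\log_2(1+\gamma_{\text{th}}')$, i.e. treating the effective threshold $\gamma_{\text{th}}'$ as $2^{C_\epsilon}-1$. I would justify this by recalling that (\ref{out1}) already folds the scaling $\gamma_{\text{th}}P/\max_\delta(\Lambda^\delta-\gamma_{\text{th}}\Delta^\delta)$ into $\gamma_{\text{th}}'$, so that $\sum_{i}z_{m,i}$ plays the role of the effective received SINR and $\log_2(1+\gamma_{\text{th}}')$ is the corresponding rate. I would also emphasize that the inversion is valid only asymptotically, since \textbf{Corollary \ref{Corollary2.2}} retains merely the dominant term of the incomplete-gamma series; this is exactly why (\ref{Capacity}) is stated as an approximation that sharpens as $\bar\gamma\to\infty$.
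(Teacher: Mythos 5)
Your proposal is correct and follows essentially the same route as the paper: the paper's proof is a one-line appeal to the definition of $\epsilon$-outage capacity combined with inverting the asymptotic outage expression of \textbf{Corollary 2}, which is exactly the computation you carry out (and you usefully make explicit the identification $C_\epsilon=\log_2(1+\gamma_{\text{th}}')$ that the paper leaves implicit).
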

\begin{proof}
This can be directly proven by using the definition of $\epsilon$-\textit{outage capacity} in \cite{goldsmith_2005} and utilizing the approximate outage expression from \textbf{Corollary 2}.
\end{proof}

\section{Power Allocation Scheme for GCoMP-Enabled NOMA} \label{PASection}
In this section, a practical clustering protocol for GCoMP-enabled NOMA scheme is presented and an optimal transmission power allocation model for the proposed system is developed.
\subsection{A Low-Complexity Full-Order Clustering Scheme}
The exact model discussed in Section II is considered as the optimal scheme for GCoMP-enabled NOMA. 
However, this model assumes that any UE may have a different ordering w.r.t different serving BSs.
This will result in a relatively high complexity at a NOMA receiver since, in the worst case scenario (i.e. different ordering for every UE at all BSs), the set of constraints for successful SIC operation will increase significantly which will be reflected negatively in the SIC unit design.
Another issue of the optimal GCoMP-enabled NOMA is that it requires a complicated scheduling algorithm that first assigns the set of signals to be decoded at every power level and then allocates power fractions for these signals. 
This should be conducted considering the received signals of all UEs.

To solve this problem, we propose that every UE has the same ordering over all related clusters.
This can be achieved by defining a global channel-quality-based metric for every UE that takes all links between every UE and all connected BSs into consideration.
For simplicity of the following analysis, we will focus on the model of full-order clustering (i.e. $n=K$) in which all BSs cooperate to serve a set of $M$ UEs simultaneously.
\textbf{Algorithm \ref{Alg.2}} shows the proposed sub-optimal clustering protocol that finds a single cluster which contains the $M$ UEs served by $K$ BSs.  
\begin{algorithm}[H] 
\caption{: Sub-optimal $K\text{-th}$ order clustering for GCoMP-enabled NOMA.}
\label{Alg.2}
\begin{algorithmic}[1]
\Require{$\mathbf{H}\in \mathbb{C}^{M\times K}, K, M$} 
  \State 
  $h_{(K)}(m)=\sum_{k=1}^K|H(m,k)|^2, \forall~ m=1, \dots, M$
  \State
  $\Psi_K=\Phi$
    \For{$i=1:M$}   
        \State $m^*=\min_m \left(h_K \right)$
        \State {$\Psi_{K}=\Psi_{K}\bigcup \{m^*\}~\mbox{and}~h_K(m^*)=\infty$}
    \EndFor
    \State \Return {$\Psi_{n,\{1, \dots, K\}}$}
\end{algorithmic}
\end{algorithm}
The main idea of this method is to produce a `global' cluster vector that contains the ordered indices of the entire set of $M$ UEs.
Specifically, the norm of the  gain vector of any arbitrary UE and all connected BSs is utilized as the global ordering metric to find the cluster members.

After finding $\Psi_K$, the goal now is to formulate and solve the optimization problem to determine the transmission power coefficients ($a_{m,k}, \forall~m=1, \dots M~\mbox{and}~k=1, \dots K$) for the transmitted power from all cooperating BSs to the set of UEs in $\Psi_K$.
Based on the proposed system model, the norm metric of the UEs will be ordered such that ${||\mathbf{H}(1,[1,\dots,K])||_2}\leq {||\mathbf{H}(2,[1,\dots,K])||_2}\leq \dots \leq {||\mathbf{H}(M,[1,\dots,K])||_2}$. Note that when lower order clustering is used (i.e. $n<K$), the link quality of the $m$-th UE should be divided by the $\bar I_{\text{ICI}}^m$. Accordingly, the optimization problem can be formulated as
\begin{equation}
\begin{aligned}
& ~\textbf{J}:~ \underset{a_{m,k}}{\text{max}}
& \text{\hspace{-75mm}} \sum_{m=1}^{M}\log_2\left(1+\frac{\sum_{k=1}^K a_{m,k}\gamma_{m,k}}{\sum_{k=1}^{K}\left(\sum_{j=m+1}^{M}a_{j,k}\right)\gamma_{m,k}+1} \right)\\
& ~\text{Subject to:} \\
&  ~ \textbf{C}_1:\log_2\left(1+\frac{\sum_{k=1}^K a_{m,k}\gamma_{m,k}}{\sum_{k=1}^{K}\left(\sum_{j=m+1}^{M}a_{j,k}\right)\gamma_{m,k}+1} \right)\geq R_m, \quad \forall m\\
& ~  \textbf{C}_2: \sum_{k=1}^{K} \left(a_{\delta_l,k}-\sum_{i=\delta_l+1}^{l}a_{i,k} \right)\gamma_{l,k}\geq P_s,
\\
&  ~  \textbf{C}_3:\sum_{m=1}^M a_{m,k}\leq 1,~\forall~ k=1, \dots, K,
\\
&~~  \forall~\delta_l= 1, \dots, l-1\;\mbox{and}\; l=2, \dots, M,\\
\end{aligned}\label{OptimizationProb}
\end{equation}
where $\gamma_{m,k}=P_K|h_{m,k}|^2/N_m$,  $R_m$ is the minimum required normalized transmission rate for a UE and is represented by the condition $\textbf{C}_1$, $\textbf{C}_2$ refers to the set of $\sum_{l=2}^M(l-1)=\frac{M(M-1)}{2}$ conditions required for successful SIC operation with receiver sensitivity of $P_s$, and $\textbf{C}_3$ represents the set of $M$ conditions related to the maximum power budget per BS.
The optimization problem \textbf{J} in  (\ref{OptimizationProb}) is a multiuser sum-rate maximization problem in an interference-limited environment. 
In general, these type of problems are non-convex due to the existence of dependent variables at the denominator of the SINR, which creates a sort of random convex-concave isolation of the objective function.
However, by the introduction of NOMA condition to these type of problems (e.g.  set of constraints given by $\textbf{C}_2$), the convexity status of the overall objective function changes.
\textbf{Lemma \ref{LemmaConcave}} states the convexity status of the problem in (\ref{OptimizationProb}). 
\begin{lemma}\label{LemmaConcave}
  Given the proposed GCoMP-enabled NOMA scheme, the optimization problem \textbf{J} formulated in (\ref{OptimizationProb}), which maximizes the normalized sum-rate of $M$ NOMA UEs per cluster, is a convex problem. 
\end{lemma}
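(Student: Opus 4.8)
The plan is to verify the two defining properties of a convex program written in maximization form: that the feasible region carved out by $\textbf{C}_1$, $\textbf{C}_2$, $\textbf{C}_3$ (together with the implicit $a_{m,k}\ge 0$) is a convex set, and that the objective of \textbf{J} is concave in the variables $\{a_{m,k}\}$. I would treat these two halves separately, since the feasibility part is routine while the concavity of the objective is where the real work lies.

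First I would dispatch the constraints. Constraint $\textbf{C}_1$ looks non-convex because it is a logarithm of a ratio, but the strict monotonicity of $\log_2(1+\cdot)$ lets me rewrite it equivalently as $\sum_{k=1}^K a_{m,k}\gamma_{m,k}\ge (2^{R_m}-1)\big(\sum_{k=1}^K(\sum_{j=m+1}^M a_{j,k})\gamma_{m,k}+1\big)$, which is an \emph{affine} inequality in $\{a_{m,k}\}$; clearing the denominator is legitimate because it is strictly positive. Constraints $\textbf{C}_2$ and $\textbf{C}_3$ are already affine in the power coefficients (the $\gamma_{l,k}$ being fixed channel quantities), and each $a_{m,k}\ge 0$ is a half-space. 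Hence the feasible region is a polyhedron, and in particular convex.

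The substantive step is to show the objective $\sum_{m=1}^M\log_2(1+\gamma_{m})$ is concave. The plan is to pass to the cumulative-power variables $c_{m,k}:=\sum_{j\ge m}a_{j,k}$ (an affine, hence concavity-preserving, change of coordinates), so that each rate term reads $\log_2(\langle\gamma_m,c_m\rangle+1)-\log_2(\langle\gamma_m,c_{m+1}\rangle+1)$ with $\langle\gamma_m,c_m\rangle:=\sum_{k=1}^K\gamma_{m,k}c_{m,k}$ and $\gamma_m:=(\gamma_{m,1},\dots,\gamma_{m,K})^\top$. Telescoping the differences across consecutive users then collapses the objective to
\begin{equation*}
\log_2\!\big(\langle\gamma_1,c_1\rangle+1\big)+\sum_{m=2}^M\Big[\log_2\!\big(\langle\gamma_m,c_m\rangle+1\big)-\log_2\!\big(\langle\gamma_{m-1},c_m\rangle+1\big)\Big],
\end{equation*}
where the $c_{M+1}=0$ boundary term vanishes. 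The lead term is concave (a logarithm of an affine function), so it remains to prove that each paired term $q_m(c_m):=\log_2(\langle\gamma_m,c_m\rangle+1)-\log_2(\langle\gamma_{m-1},c_m\rangle+1)$ is concave; concavity of the whole objective then follows since each piece is composed with the affine map $\{a_{m,k}\}\mapsto c_m$ and a sum of concave functions is concave.

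I expect this last step to be the main obstacle. Writing $A_m=\langle\gamma_m,c_m\rangle+1$ and $B_m=\langle\gamma_{m-1},c_m\rangle+1$, the Hessian is $\nabla^2 q_m=\gamma_{m-1}\gamma_{m-1}^\top/B_m^2-\gamma_m\gamma_m^\top/A_m^2$, so concavity is equivalent to the rank-one positive-semidefinite comparison $\gamma_m\gamma_m^\top/A_m^2\succeq\gamma_{m-1}\gamma_{m-1}^\top/B_m^2$. The crux is to leverage the user ordering produced by \textbf{Algorithm \ref{Alg.2}} — namely $\|\mathbf{H}(m-1,\cdot)\|_2\le\|\mathbf{H}(m,\cdot)\|_2$ — to establish this domination. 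In the scalar (single-cooperating-BS) reduction it is immediate, since $\gamma_{m-1}\le\gamma_m$ forces $\gamma_m/A_m\ge\gamma_{m-1}/B_m$, and this is the intuition I would try to lift. The genuinely delicate part is the multi-BS case, where $\gamma_m$ and $\gamma_{m-1}$ are vectors and the rank-one domination must be argued carefully from the ordering (invoking, where needed, the SIC-separation constraint $\textbf{C}_2$, which pins down the relative growth of the cumulative powers on the feasible set). Discharging this positive-semidefinite comparison is the heart of the proof; once it is in hand, convexity of \textbf{J} is immediate.
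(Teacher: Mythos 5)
Your handling of the constraints matches the paper's: clearing the (positive) denominator in $\textbf{C}_1$ to obtain an affine inequality, and observing that $\textbf{C}_2$, $\textbf{C}_3$ are already affine, so the feasible set is a polyhedron. The treatment of the objective, however, diverges from the paper and contains a genuine gap. Your telescoping in the cumulative variables $c_{m,k}=\sum_{j\ge m}a_{j,k}$ is algebraically correct, but the step you yourself flag as ``the heart of the proof'' --- the rank-one comparison $\gamma_m\gamma_m^\top/A_m^2\succeq\gamma_{m-1}\gamma_{m-1}^\top/B_m^2$ --- cannot be discharged for $K\ge 2$. A rank-one matrix $uu^\top$ can dominate $vv^\top$ in the positive-semidefinite order only if $v\in\operatorname{span}(u)$: for any $w\perp u$ with $v^\top w\ne 0$ one gets $w^\top(uu^\top-vv^\top)w=-(v^\top w)^2<0$. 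The channel vectors $\gamma_m$ and $\gamma_{m-1}$ of two distinct UEs are generically not collinear, so $w^\top\nabla^2 q_m\,w=(\gamma_{m-1}^\top w)^2/(B_m^2\ln 2)>0$ in the direction $w\perp\gamma_m$, and each paired term $q_m$ fails to be concave. The ordering by row norms from the clustering algorithm controls only the magnitudes $\|\gamma_{m-1}\|,\|\gamma_m\|$, not their directions, and $\textbf{C}_2$ restricts the feasible polyhedron but does not change the Hessian of $q_m$, whose sign pattern is fixed by the channel vectors alone; since the feasible set is full-dimensional, restriction to it does not rescue negative semidefiniteness. So the scalar intuition does not lift, and the proposal as written does not establish the lemma.

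For comparison, the paper's own argument (Appendix C) takes a different decomposition: each rate term is split as $f_1(m)+f_2(m)$ with $f_1(m)=\log_2(\theta_1(m)+1)$ concave and $f_2(m)=-\log_2(\theta_2(m)+1)$ convex in the affine quantities $\theta_1,\theta_2$, and the paper then argues qualitatively that, because of $\textbf{C}_2$, any increase in $\theta_2(m)$ forces a larger increase in $\theta_1(m)$, so ``the degree of convexity of $f_2$ exceeds the degree of concavity of $f_1$.'' That argument is itself informal --- it is not a Hessian or perspective-function computation, and it concludes that the objective is \emph{convex} rather than concave --- so neither route, as written, constitutes a complete rigorous proof. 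If you want to salvage your approach, the honest statement is that the per-term concavity holds in the single-BS case ($K=1$) or when the $\gamma_m$ are mutually proportional, and that the general multi-BS case requires either a joint (not term-by-term) curvature argument over the feasible polyhedron or a reformulation (e.g., successive convex approximation) rather than a direct convexity claim.
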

\begin{proof}
See \textbf{Appendix C}.
\end{proof}
Due to the convexity of problem (\ref{OptimizationProb}), a closed-form expression for optimal transmission power fractions $\{a_{m,k}\}_{m=1, \dots, M}^{k=1, \dots, K}$ can be derived using the Lagrangian multipliers method as follows.
For the simplicity of analysis, we will illustrate the derivation of problem \textbf{J} at (\ref{OptimizationProb}) with relatively small system parameters ($M=3$ and $K=2$); however, the same procedure can be used to generalize the solution for any set of parameters. 
The Lagrangian function of problem \textbf{J} in (\ref{OptimizationProb}) can then be written as 
\begin{dmath}\label{Lagrangian}
    \mathcal{L}\left( \bm{a}, \bm{\eta}, \bm{\mu}, \bm{\tau}\right)=
    \sum_{m=1}^{3}\log_2\left(1+\frac{\sum_{k=1}^2 a_{m,k}\gamma_{m,k}}{\sum_{k=1}^{2}\left(\sum_{j=m+1}^{3}a_{j,k}\right)\gamma_{m,k}+1} \right)
    +
    \sum_{m=1}^3\eta_m
    \left[
    \sum_{k=1}^2
    \left(
    a_{m,k}-\gamma_{\text{th}}^m\sum_{j=m+1}^3 a_{j,k}
    \right)
    \gamma_{m,k}
    -\gamma_{\text{th}}^m
    \right]
    +
    \mu_1
    \left[
    P_s-
    \sum_{k=1}^2
    \left(
    a_{1,k}-a_{2,k}
    \right)\gamma_{2,k}
    \right]
    +
    \sum_{i=2}^3 \mu_i
    \left[
    P_s-
    \sum_{k=1}^2
    \left(
    a_{i-1,k}-\sum_{j=i}^3 a_{j,k}
    \right)\gamma_{3,k}
    \right]
    +
    \sum_{k=1}^2 \tau_k 
    \left[
    1-\sum_{m=1}^3 a_{m,k}
    \right],
\end{dmath}
where $\bm{a}=\left\{a_{i,j} \right\}_{i=1,2,3}^{j=1,2}$, and $\eta\geq 0$, $\mu\geq 0$ and $\tau\geq 0$ are the Lagrange multipliers corresponding to $\textbf{C}_1$, $\textbf{C}_2$, and $\textbf{C}_3$, respectively.
 Further discussion on the solution of Problem \textbf{J} in (\ref{OptimizationProb}) is given in \textbf{Appendix D}.




Finally, it is important to mention that for the proposed GCoMP-enabled NOMA scheme, in practical scenarios, the set of cooperating BSs may not be able to provide every UE within their cluster with its minimum rate requirement due to the power budget limitations on different BSs (infeasible problem).
In such cases, the UE with the minimum norm metric (w.r.t. all connected BSs) will be removed from its current NOMA cluster and join another cluster that uses a different sub-band (with different or the same set of cooperating BSs). 


\section{Numerical Results}
In this section, we provide some numerical results to discuss the performance of the proposed scheme under different system parameters, and then present illustrative results on the proposed NOMA transmission power allocation scheme.
Each value  is  obtained via  $2\times10^6$ Monte-Carlo simulation runs. For simplicity, 
we study only the case where all channel gains are i.i.d.  
Table \ref{Table3} presents the main network parameters used to obtain the simulation and analytical results.
\begin{table}[h!]
    \centering
      \caption{Simulation Parameters}
\begin{tabular}{|c|c|}
\hline
Parameter & Value  \\
\hline
\hline
AWGN PSD per UE &
$-169$ dBm/Hz  \\
\hline
Transmit power budget at a BS, $P$ & Variable 
\\
\hline
SIC sensitivity, $P_s$  & $1$ dBm
\\
\hline
SINR threshold per UE, $\gamma_{\text{th}}$  & $15$ dBm\\
\hline
Target SINR outage probability, $\epsilon$ & $10^{-5}$\\
\hline
\end{tabular}
    \label{Table3}
\end{table}
\subsection{Outage and Capacity Analysis}
In this section, we present  results on the outage and capacity performance of the proposed scheme.
We first start by evaluating the performance gain of the proposed GCoMP scheme compared to that of conventional CoMP system (considering both orthogonal multiple access [OMA] and NOMA paradigms).
Fig. \ref{Fig0} shows the average spectral efficiency per UE with different cooperation and multiple acces scenarios.
\begin{figure}[!htb]
		\centering
	\includegraphics[height=7.751cm, width=9.45cm]{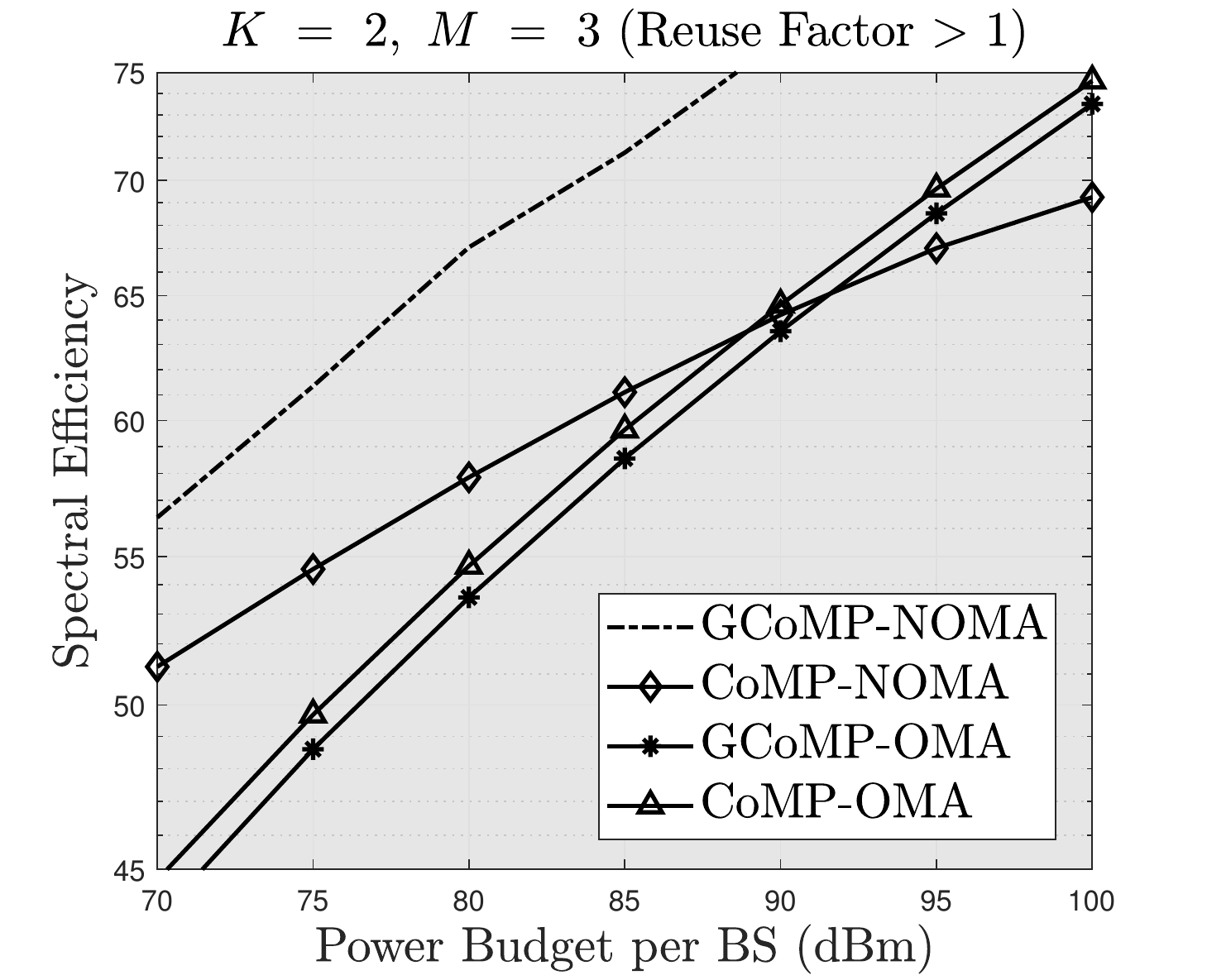}
		\caption{Average spectral efficiency with different cooperation and multiple acces scenarios.}\label{Fig0}
	\end{figure} 
It can be noticed from this figure that with OMA, a slight enhancement can be achieved when moving from CoMP into its generalized version (GCoMP) when $K=2$. 
However, more significant enhancement can be achieved for higher $K$.
Additionally, the GCoMP-NOMA system has been found to achieve the best performance for all power range while the performance of CoMP-NOMA scheme 
 lies between those of GCoMP-NOMA and GCoMP-OMA for relatively low power levels. The performance of CoMP-NOMA has been observed to degrade significantly as the maximum transmission power budget per BS increases due to interference caused by the non-cooperating BSs to the cell-centre UEs in other cells.
Note that to simulate CoMP-NOMA scheme and compare it with the other schemes in a fair manner, we have assumed that the interference power from the non-cooperating BSs to be $10^{-3}$ times the overall transmission power which takes into consideration the high distances between cell-centre UEs and other cells in a typical JT-CoMP scheme.  

To evaluate the $n{\text{-th}}$ order clustering scheme, Fig. \ref{Fig0} shows the probability of outage of the proposed system under different clustering levels versus the maximum transmission power budget per BS.
\begin{figure}[!htb]
		\centering
	\includegraphics[height=8.351cm, width=9.85cm]{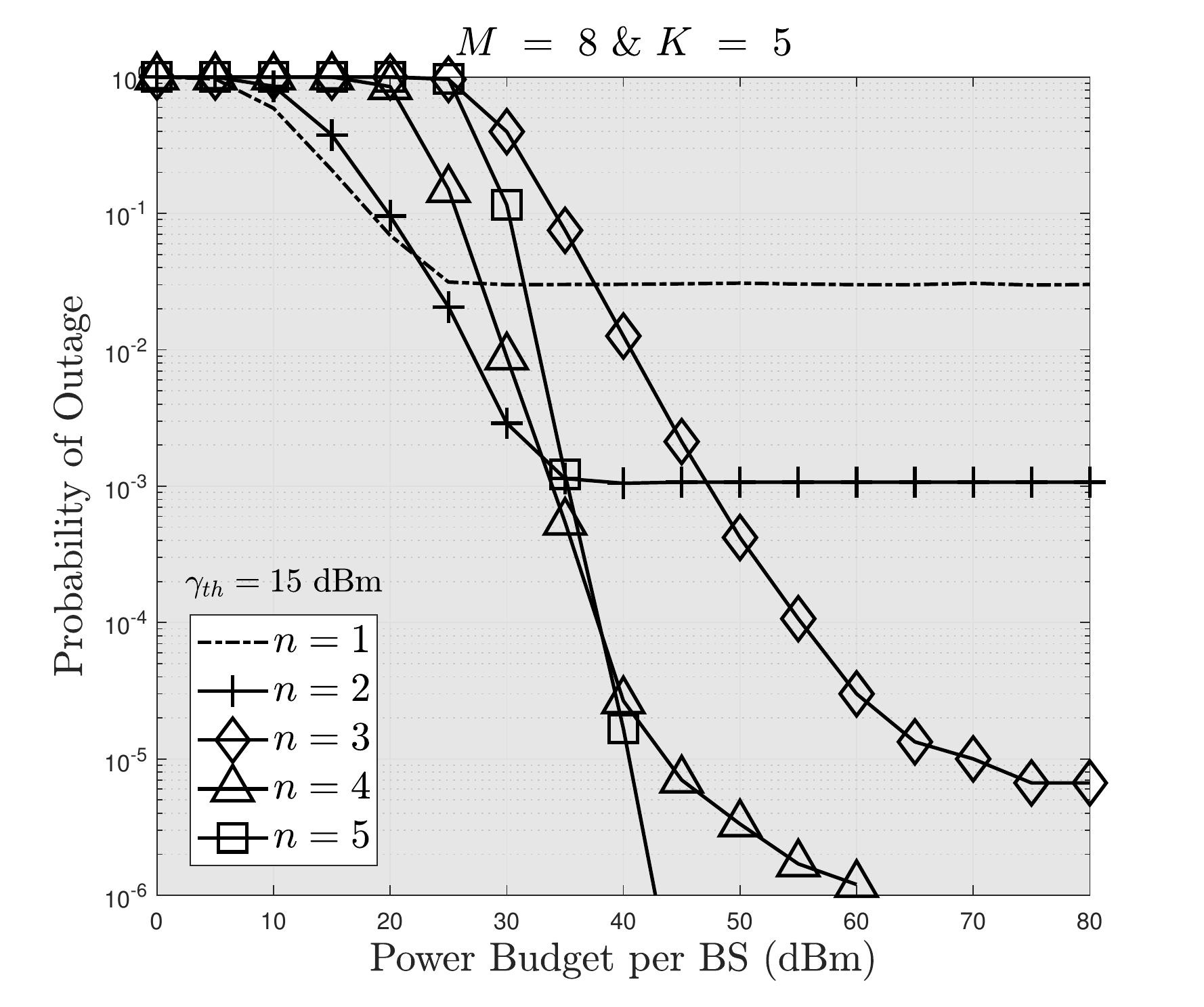}
		\caption{$\text{P}_{\text{out}}^{(n)}$ versus $P$ for different clustering order $n$.}\label{Fig0}
	\end{figure} 
It can be noticed from Fig. \ref{Fig0} that the outage performance is enhanced significantly when the clustering order increases. When many BSs, e.g.  $K=5$ are transmitting in the same cluster sub-band while the clustering order is low, the outage performance deteriorates. For example, a total service blockage occurs at $K=5$ and $n=1$. 
However, when the clustering order $n$ is relatively close to the number of cooperating BSs $(K)$, the interference on different UEs caused by non-serving BSs becomes tolerable. 

It is also important to investigate the effect of increasing the number of UEs per single NOMA cluster on the overall outage performance.
Fig. \ref{Fig1} shows the outage probability for a UE versus the maximum transmission power budget per BS under full-order clustering.
\begin{figure}[!htb]
		\centering
	\includegraphics[height=8.351cm, width=9.85cm]{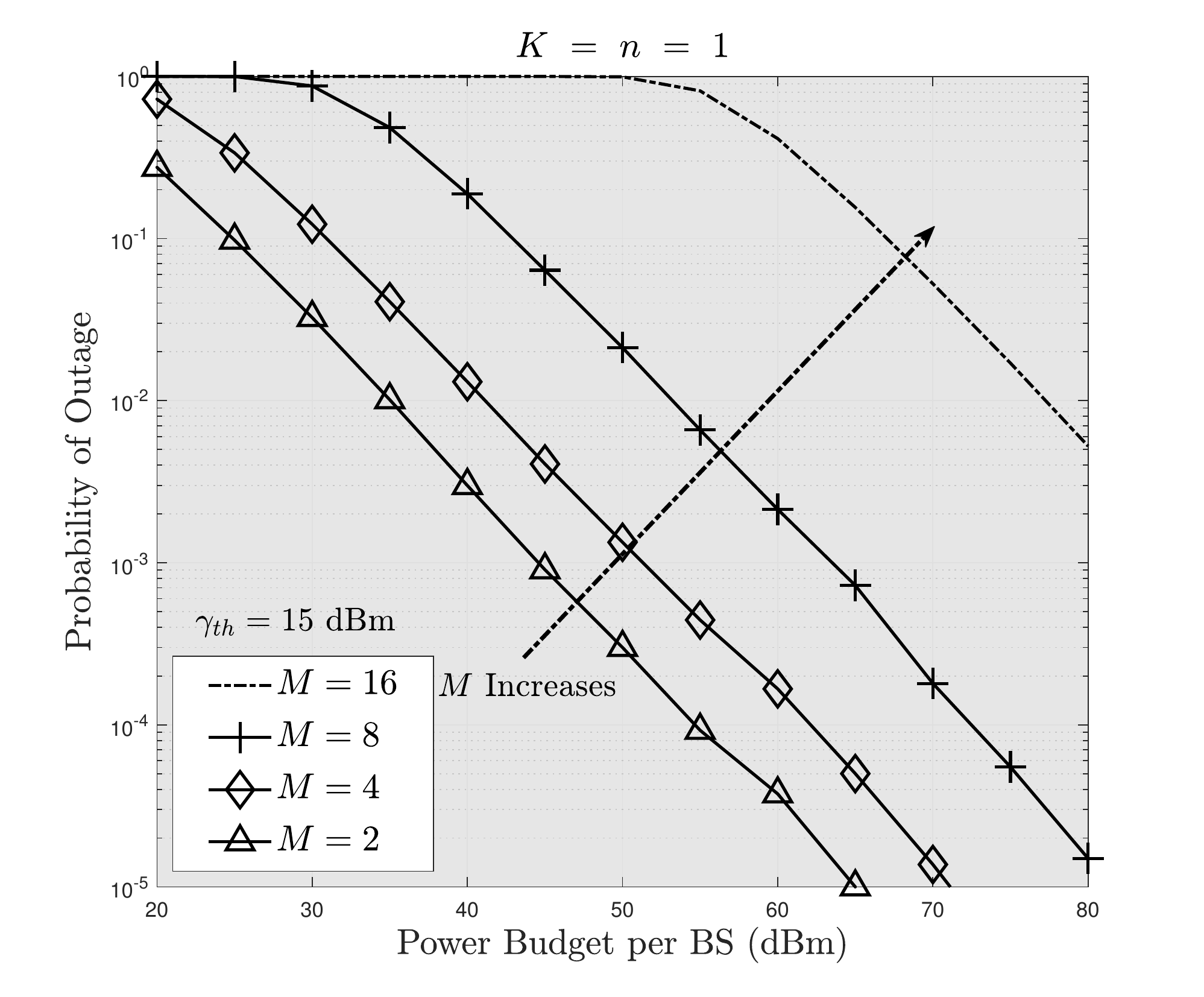}
		\caption{$\text{P}_{\text{out}}^{(n)}$ versus $P$ for different number of UEs $M$.}\label{Fig1}
	\end{figure} 
It can be noticed that when only one BS is serving a set of NOMA UEs, increasing the NOMA cluster size will cause a significant degradation in system coding gain (i.e. the outage performance curve shifts to the right).
This is because, as $M$ increases, the average number of interfering signal components with lower weights than the desired signal will increase.
This degradation prevents the potential use of NOMA access scheme in its conventional form (one serving BS) at the massive scale.

To compensate for the performance degradation caused by large NOMA cluster size, the number of serving BSs per cluster may be increased as in the proposed $n$-th order clustering scheme.
Fig. \ref{Fig2} shows the outage probability for a UE versus the maximum power budget per BS under different number of serving BSs and with a relatively large cluster size ($M=8$).
\begin{figure}[!htb]
		\centering
	\includegraphics[height=8.451cm, width=9.95cm]{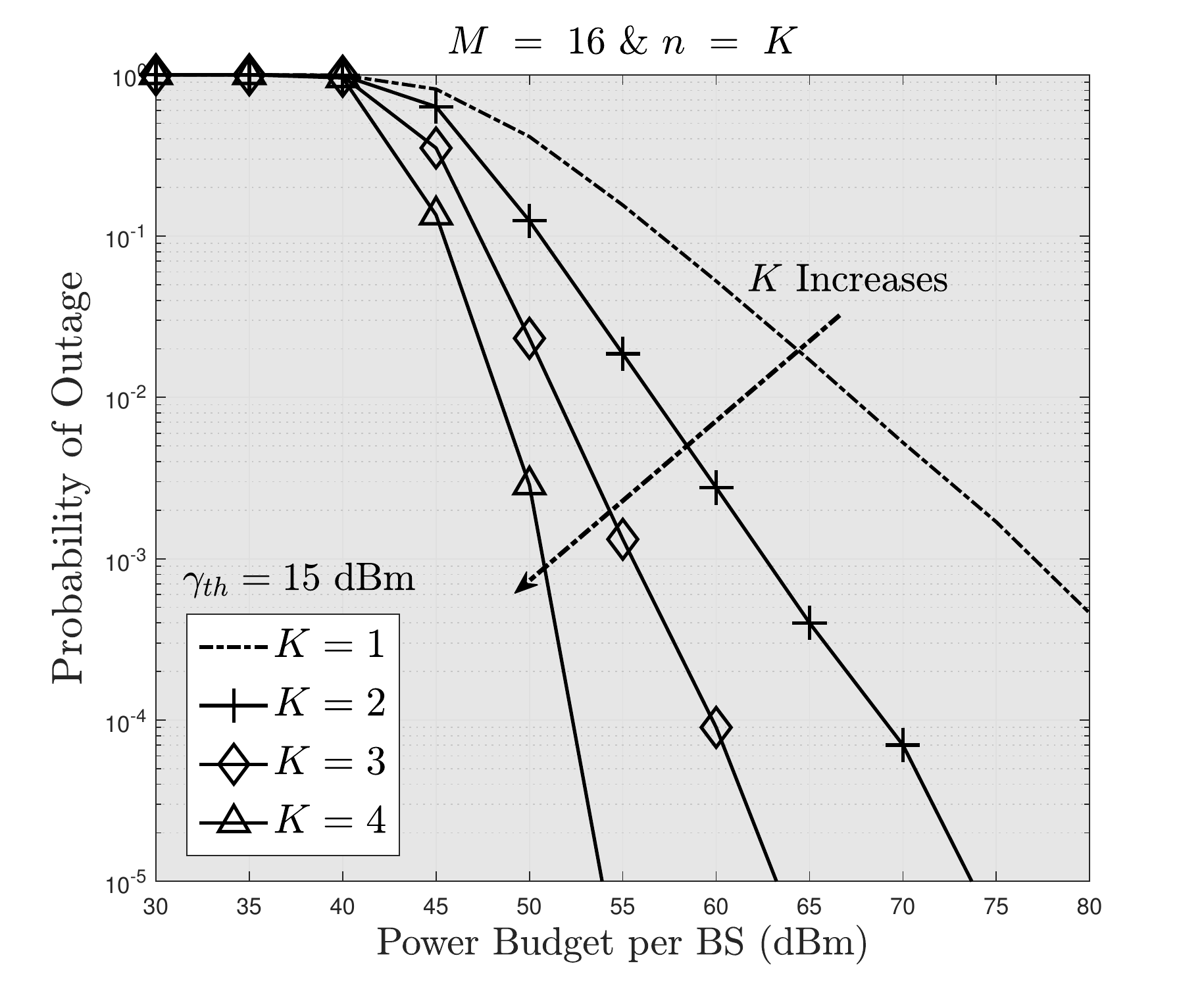}
		\caption{$\text{P}_{\text{out}}^{(n)}$ versus $P$ for different number of UEs $M$.}\label{Fig2}
	\end{figure} 
It can be noticed that a large number of NOMA clusters can co-exist in the same spectrum band when the number of cooperating BSs $(K)$ increases.

Finally, Fig. \ref{Fig5} shows the $\epsilon$-outage capacity ($C_{\epsilon}$) versus transmission power $P$ for different number of cooperating APs.
\begin{figure}[!htb]
		\centering		\includegraphics[height=8.451cm, width=9.95cm]{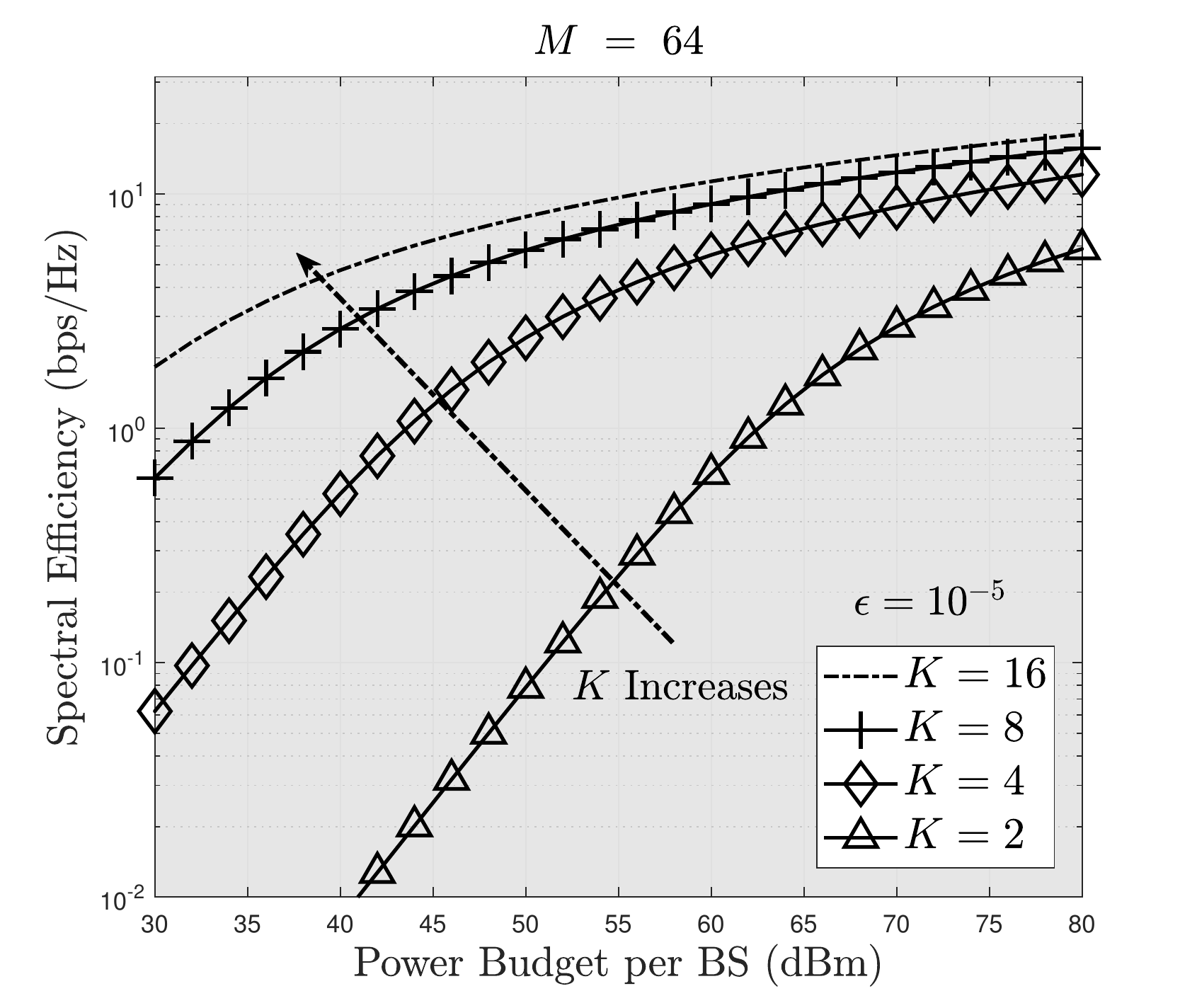}	\caption{$C_{\epsilon}$ versus $P$ for different number of BSs $K$.}\label{Fig5}
	\end{figure}
As was concluded from (\ref{Capacity}), increasing $K$ exponentially annihilates the negative effect caused by the large number of NOMA UEs using the same sub-band.
\subsection{Power Allocation}
In this section, we study the spectral efficiency of the simplified GCoMP-enabled NOMA  scheme with optimal transmission power allocation.
Additionally, the performance of the proposed GCoMP-enabled NOMA is compared with its OMA counterparts.
The optimal solution is obtained by solving the KKT conditions derived in Eqs. \ref{ConditionsFinal} in \textbf{Appendix D} for every channel realization and then selecting the set of feasible solutions (when transmission power budget is adequate to fulfill all the constraints of the optimization problem \textbf{J} in (\ref{OptimizationProb}).
The channel realization here is assumed to follow i.i.d fading distribution with unity variance.
For a given transmission power budget value, the overall spectral efficiency is calculated by averaging the accumulated sum-rate produced from realizations with feasible solutions for that power budget.
Additionally, the minimum power rate for every UE is considered as the achievable rate for the same UE using OMA with GCoMP. 

Fig. \ref{Fig_Opt_1} shows the optimized spectral efficiency (sum-rate) versus the maximum transmission power budget of GCoMP network layout with both multiple access schemes (NOMA and OMA).
\begin{figure}[!htb]
		\centering		\includegraphics[height=8.951cm, width=9.45cm]{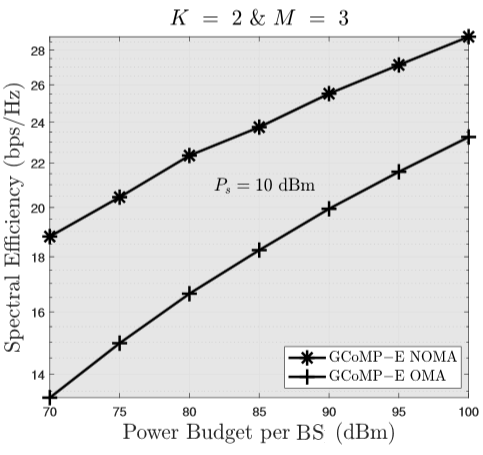}	\caption{Sum rate versus $P$ under optimal transmission power allocation.}\label{Fig_Opt_1}
	\end{figure}
As can be  noticed here, the sum-rate of the GCoMP-enabled NOMA scheme with $K=2$ and $M=3$ is superior to that of GCoMP with OMA scheme.
However, the spectral efficiency enhancement decreases as the maximum transmission power budget increases.
  
To study the performance of UEs individually, Fig. \ref{Fig_Opt_2} shows the spectral efficiency per UE under optimal transmission power allocation. Note that the UEs are ordered in an ascending mode from the least norm UE to the highest norm UE.
\begin{figure}[!htb]
		\centering	      	\includegraphics[height=8.751cm, width=9.45cm]{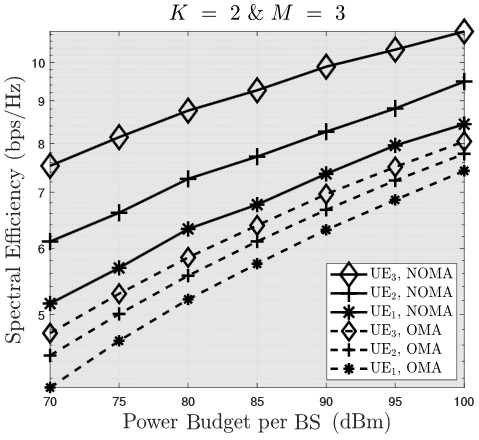}	\caption{UE rate versus $P$ under optimal transmission power allocation.}\label{Fig_Opt_2}
	\end{figure}
It can be noticed that a UE with a higher norm always yields a larger spectral efficiency compared to the one with a lower norm. 
This is because power allocation for NOMA  is based on inverse water-filling method until all UEs are provided with minimum rate requirements, and then water-filling is utilized. 
In particular, we notice that, the smaller the power budget per BS, the larger is the number of BSs which apply power allocation to the cluster members for NOMA, and the higher the power budget, the smaller is the number of BSs which apply power allocation to meet the requirements of NOMA (i.e. most of the BSs then use the ordinary water-filling power allocation).
Note that increasing the value of the desired SIC receiver sensitivity $P_s$ will decrease the number of feasible solutions for every channel realization, and hence, increases the processing time during real-time operation.  
    \section{Conclusion}
A novel generalized CoMP-enabled NOMA scheme has been proposed and evaluated.
In particular, the traditional joint-transmission CoMP scheme has been generalized to be applied for all users (i.e. cell-centre as well as cell-edge users) within the coverage area of a wireless network.
Furthermore, every base station has been assumed to apply a multi-user NOMA scheme for all users associated to it.
To evaluate the proposed scheme, the closed-form expressions for the probability of outage and outage capacity per user with different orders of BS cooperation have been derived.
To reduce the complexity of the proposed system, a low-complexity full order clustering protocol has been designed for the generalized CoMP-enabled NOMA system where the optimal transmission power allocation method has been obtained.
Findings show that it is possible to deploy NOMA with a large number of users per sub-band and tolerable complexity as long as the number of cooperating base stations is comparable to the number of NOMA users. Possible extensions of this work would include an evaluation of the  effect of imperfect CSI and synchronization on the system performance as well as a comprehensive evaluation and implementation  of the SIC methods for the proposed scheme. 

	\appendices
	\section{}
\setcounter{equation}{0}
	First, let us define $I_{\text{ICI}}^m=\sum_{w=n+1}^K\Phi_w|h_{m,w}|^2=\sum_{w=n+1}^Ky_{m,w}$, where $Y_{m,w}\thicksim \text{Exp}\left(\lambda_{m,w}\right)$ and $\lambda_{m,w}=1/2\Phi_w\sigma_{m,w}^2$. 
Using the theory of order statistics, the PDF of $I_{\text{ICI}}^m$ can be defined as
\begin{dmath}
    f_{I_{\text{ICI}}^m}(y_{m,n+1},\dots, y_{m,K})=\sum_{i_{n+1}, \dots, i_{K}}^{\{1,2, \dots, K\}}f_{Y_{m,i_{n+1}}}\left(y_{m,n+1}\right)\dots f_{Y_{m,i_{K}}}\left(y_{m,K}\right)\\
    \times 
    \prod_{j=1}^n\left(1-F_{Y_{m,i_j}}(y_{m,n+1})\right)\\
    =\sum_{i_{n+1}, \dots, i_{K}}^{\{1,2, \dots, K\}}\prod_{l=n+1}^{K}\lambda_{m,i_l}e^{-\lambda_{m,i_l}y_{m,l}}\prod_{j=1}^n e^{-\lambda_{m,i_j}y_{m,n+1}}
    ,\label{JointPDF1}
\end{dmath}
where $y_{m,K}\leq \dots \leq y_{m,n+1}$ and  $\{i_1, \dots, i_{K}\}$ are distinct indices that take values from $\{1, \dots, K\}$. 
Accordingly, the value of $\bar I_{\text{ICI}}$ can be defined as
\begin{dmath}
    \bar I^m_{\text{ICI}}=\sum_{i_{n+1}, \dots, i_{K}}^{\{1,2, \dots, K\}}
\int_0^{\infty}\int_{y_{m,K}}^{\infty}\dots \int_{y_{m,n+2}}^{\infty}
\sum_{i=n+1}^Ky_{m,i}
\\
\times
    \prod_{l=n+1}^{K}\lambda_{m,i_l}e^{-\lambda_{m,i_l}y_{l}}\prod_{j=1}^n\lambda_{m,i_l}e^{-\lambda_{m,i_j}y_{m,n+1}}d \dots dy_{m,K}.
\end{dmath}

Due to the dependence between $y_i's$, this integral cannot be changed into a product of independent integrals. 
Therefore, $\bar I_{\text{ICI}}^m$ can be rewritten as
\begin{dmath}
       \bar I^m_{\text{ICI}}=\sum_{i_{n+1}, \dots, i_{K}}^{\{1,2, \dots, K\}}
\int_0^{\infty}\dots \int_{0}^{\infty}\sum_{w=1}^{K-n}w x_w
\times
\left(
\prod_{j=n+1}^K\lambda_{i_j}e^{\left(\sum_{q=n+1}^j \lambda_{m,i_q}+\sum_{k=1}^n \lambda_{m,i_k} \right)x_j}
\right)\dots dx_{K-n},\label{IaveInt}
\end{dmath}
where we have used Sukhatme transformation of random variables (rv)s such that $x_i=y_{m,n+i}-y_{m,n+i+1}$ and $x_{n}=y_{m,K}$ represent an independent random variables \cite{CSUKHATME2012}. Due to the independence among $x$s, (\ref{IaveInt}) can be easily changed into a sum of a product of one-dimensional integral. Hence, we obtain (\ref{IFainal}). 
\section{}
\setcounter{equation}{0}
In (\ref{out1}), we have two types of ordering to be considered. The first one is the ordering of BSs w.r.t the $m{\text{-th}}$ UE. 
The second ordering is the ordering of the $m{\text{-th}}$ UE w.r.t all clusters it belongs to. 
First, we will consider the ordering of the BSs connected to the $m{\text{-th}}$ UE.
Accordingly, the PDF of $z_m$ can be defined as
\begin{dmath}
    f_{Z_m}(z_{m,1}, \dots, z_{m,n})=\sum_{i_{1}, \dots, i_{n}}^{\{1,2, \dots, K\}} f_{Z_{m,i_{1}}}\left(z_{m,1}\right)\dots f_{Z_{m,i_{n}}}\left(z_{m,n}\right) 
    \prod_{j=n+1}^K F_{Z_{m,i_j}}(y_{n})\\\
    =\sum_{i_{1}, \dots, i_{n}}^{\{1,2, \dots, K\}} \prod_{l=1}^{n}\alpha_{m,i_l}e^{-\alpha_{m,i_l}z_{m,l}}\prod_{j=n+1}^K\left(1- e^{-\alpha_{m,i_j}z_{m,n}}\right).
\end{dmath}
Utilizing the Sukhatme transformation, the MGF of $Z_m$ is given by
\begin{dmath}
    M_{Z_m}(s)=
    \sum_{i_{1}, \dots, i_{n}}^{\{1,2, \dots, K\}}
    \left[
    \prod_{q=1}^n \frac{\alpha_{m,i_q}}{\sum_{r=1}^q \alpha_{m,i_r}-q s}
    -
    \sum_{h_1=1}^{K-n}(-1)^{h_1}\sum^{\{n,\dots,K\}}_{j_1\leq \dots \leq j_{h_1}}
    \prod_{d=1}^{n}\frac{\alpha_{m,i_d}}{C_d^{m(h_1,i,j)}-d s} 
    \right],\label{MGF1}
\end{dmath}
where, $\{j_1, \dots, j_{h_1}\}$ are distinct ordered indices taking values from $\{n,\dots, K\}$ and $C_d^{m(h_1,i,j)}$ is defined as
\[
C^m_{d(h_1,i,j)}=\left\{ \begin{array}{cc}
 \sum_{r=1}^d \alpha_{m,i_r}    & d\in \left[1, \cdots, n-1\right] \\
  \sum_{r=1}^d \alpha_{m,i_r} + \sum_{l=1}^{h_1} \alpha_{m,j_l} & d=n
\end{array}
\right.
\].
To obtain the PDF $f_{Z_m}(z)$, it is convenient to express (\ref{MGF1}) as a partial fraction expression. Specifically, 
\begin{dmath}
    M_{Z_m}(s)=
    \sum_{i_{1}, \dots, i_{n}}^{\{1,2, \dots, K\}}
    \left( \prod_{q=1}^n \frac{\alpha_{m,i_q}}{q}\right)
    \left[
    \sum_{t_1=1}^n \frac{\eta_{t_1}^m}{\rho_{t_1}^m- s}
    -
    \sum_{h_1=1}^{K-n}(-1)^{h_1}\sum^{\{n,\dots,K\}}_{j_1\leq \dots \leq j_{h_1}}
    \sum_{t_2=1}^n
    \frac{\eta_{t_2}^m}{\rho_{t_2}^m - s} 
    \right],\label{MGF2}
\end{dmath}
where $\rho_{t_1}^m=\sum_{r=1}^{t_1}\alpha_{m,i_{r}}/t_1$, $\rho_{t_2}^m=C^m_{t_2}(h_1,i,j)/t_2$, $\eta_{t_1}^m=\prod_{k=1,k\neq t_1}^n\left(\rho_{k}^m-\rho_{t_1}^m \right)^{-1}$ and $\eta_{t_2}^m=\prod_{k=1,k\neq t_2}^n\left(\rho_{k}^m-\rho_{t_2}^m \right)^{-1}$. Note that (\ref{MGF2}) is valid only under the assumption that all $\rho_{t_1}^m$(and $\rho_{t_2}^m$) are distinct (the i.n.d case).
Upon finding the Laplace transform $L_{Z_m}(x)=M_{Z_m}(-x)$ and using the theory of inverse Laplace transform, the PDF of $Z_m$ (denoted by $f_{Z_m}(z)$) is then given by

\begin{dmath}
    f_{Z_m}(z)=
    \sum_{i_{1}, \dots, i_{n}}^{\{1,2, \dots, K\}}J_1(m,i)
    \left[
    \sum_{t_1=1}^n {\eta_{t_1}^m}e^{-\rho^m_{t_1}z}
    -
    \sum_{h_1=1}^{K-n}(-1)^{h_1}\sum^{\{n,\dots,K\}}_{j_1\leq \dots \leq j_{h_1}}
    \sum_{t_2=1}^n
    {\eta_{t_2}^m}e^{-\rho_{t_2}^mz} 
    \right],\label{PDF1}
\end{dmath}
\hspace{3mm}
where $J_1(m,i)=\left( \prod_{q=1}^n \frac{\alpha_{m,i_q}}{q}\right)$.
Now, we consider the ordering of the $m{\text{-th}}$ UE within the cluster of its best serving BS.
By utilizing the CDF expression of the $m{\text{-th}}$ order statistics for the set of i.n.d rvs given in \cite[Eq. 5.2.1]{OrderStcs2005}, the outage probability of the proposed system is given as in \textbf{Theorem 2}.
\section{}
\setcounter{equation}{0}
To proof the convexity of the problem in (\ref{OptimizationProb}), we need to first prove that the objective function is concave and then show that all constraints represents an affine transformation of dependent variables.
It is apparent that the objective function (let us denote it as $R_s$) is twice differentiable for all dependent variables ($a_{m,k}$).
However, due to the two-dimensional nature of dependent variables (functions of $m$ and $k$), finding the Hessian matrix of the partial second derivative even for the most simplified model ($M=2~\&~K=2$) would be very lengthy and tedious.
Therefore, we use an intuitive method to proof the concavity of $R_s$.

The objective function at (\ref{OptimizationProb}) can be rewritten as 
\begin{dmath}
    R_s=\sum_{m=1}^M \left[\underbrace{ \log_2\left(\theta_1(m)+1 \right)}_{f_1(m)}+\underbrace{\log_2\left(\frac{1}{\theta_2(m)+1} \right)}_{f_2(m)}
     \right],
\end{dmath}
where $\theta_1(m)=\sum_{k=1}^K\left(a_{m,k}+\sum_{j=m+1}^Ma_{j,k}\right)\gamma_{m,k}$ and $\theta_2(m)=\sum_{k=1}^K\left(\sum_{j=m+1}^M a_{j,k}\right)\gamma_{m,k}$.
It is apparent that $f_1(f_2)$ is a monotonically increasing (decreasing) function of $\theta_1(m)\left(\theta_2(m)\right)$ with $f_1$ being a strictly concave function and $f_2$ being a strictly convex function (logarithmic functions). 
Additionally, due to the set of constraints $\textbf{C}_2$, the overall value of $\theta_1(m)$ will be always strictly greater than that of $\theta_2(m)$ and any increase in $\theta_2(m)$ will result in a higher increase in $\theta_1(m)$.
Accordingly, the degree of convexity of $f_2(m)$ will be always greater that the degree of concavity of $f_1(m)$ which will make the summation $f_1(m)+f_2(m)$ to be always strictly convex (for all $a_{i,j}, i=1, \dots, M$ and $j=1, \dots, K$).
Finally, since $R_s$ represents a positive linear sum of convex functions, $R_s$ is a convex function as well.
Note that without the  constraints $\textbf{C}_2$, the problem in (\ref{OptimizationProb}) will be neither convex nor concave. 
Nevertheless, a global  optimal point for such non-convex problem can be found using the  `MAPEL' algorithm \cite{MAPEL}.

Now, we need to prove that the constraints $\textbf{C}_1$ through $\textbf{C}_3$ represent affine constraints.
It is apparent that the constraints $\textbf{C}_2$ and $\textbf{C}_3$ represent affine functions for all $a_{i,j}$. 
Additionally, the constraints $\textbf{C}_1$ can be rewritten as
$\sum_{k=1}^K a_{m,k}\gamma_{m,k}-\gamma_{\text{th}}^m\left(1+\sum_{k=1}^{K}\left(\sum_{j=m+1}^{M}a_{j,k}\right)\gamma_{m,k} \right)\geq 0$, where $\gamma_{\text{th}}^m=2^{R_m}-1$, which represent an affine function as well.
Hence, \textbf{Theorem \ref{Theorem2}} is proved.

\section{}
\setcounter{equation}{0}
Since the problem \textbf{J} in (\ref{OptimizationProb}) is convex with affine constraints, then the Karush-Kuhn-Tucker (KKT) conditions can be given by taking the partial derivative of (\ref{Lagrangian}) w.r.t. $\{a_{m,k}\}_{m=1,2,3}^{k=1,2}$, $\{\eta_m\}_{m=1,2,3}$, $\mu_1$, $\{\mu_i\}_{i=1,2}$ and $\{\tau_k\}_{k=1,2}$ as follows:
\begin{dgroup}\label{ConditionsFinal}
\begin{dmath}
\frac{\partial \mathcal{L}}{\partial a_{1,1}}=\frac{\gamma_{1,1}}{\sum_{k=1}^2\left(\sum_{j=1}^3 a^*_{j,k} \right)\gamma_{1,k}+1}+\eta^*_1 \gamma_{1,1}-\mu^*_1\gamma_{2,1}-\mu^*_2\gamma_{3,1}-\tau^*_1
\begin{split}
    \leq 0
\end{split},
\end{dmath}
\begin{dmath}
\frac{\partial \mathcal{L}}{\partial a_{1,2}}=\frac{\gamma_{1,2}}{\sum_{k=1}^2\left(\sum_{j=1}^3 a^*_{j,k} \right)\gamma_{1,k}+1}+\eta^*_1 \gamma_{1,2}-\mu^*_1\gamma_{2,2}-\mu^*_2\gamma_{3,2}-\tau^*_2
\begin{split}
\leq 0    
\end{split},
\end{dmath}
\begin{dmath}
\frac{\partial \mathcal{L}}{\partial a_{2,1}}=\frac{-\gamma_{1,1}\left(\sum_{k=1}^2a^*_{1,k}\gamma_{1,k} \right)}{\left(\sum_{k=1}^2\left[\sum_{i=1}^3a^*_{i,k}\right]\gamma_{1,k}+1\right) \left(\sum_{k=1}^2\left[\sum_{j=2}^3a^*_{j,k}\gamma_{1,k}\right]+1 \right)}
+
\frac{\gamma_{2,1}}{\sum_{k=1}^2\left(\sum_{j=2}^3 a^*_{j,k} \right)\gamma_{2,k}+1}
-\eta^*_1\gamma_{\text{th}}^1\gamma_{1,1} +\eta^*_2 \gamma_{2,1}+\mu^*_1\gamma_{2,1}+\mu^*_2\gamma_{3,1}-\mu^*_3\gamma_{3,1}-\tau^*_1
\begin{split}
    \leq 0
\end{split},
\end{dmath}
\begin{dmath}
\frac{\partial \mathcal{L}}{\partial a_{2,2}}=\frac{-\gamma_{1,2}\left(\sum_{k=1}^2a^*_{1,k}\gamma_{1,k} \right)}{\left(\sum_{k=1}^2\left[\sum_{i=1}^3a^*_{i,k}\right]\gamma_{1,k}+1\right) \left(\sum_{k=1}^2\left[\sum_{j=2}^3a^*_{j,k}\gamma_{1,k}\right]+1 \right)}
+
\frac{\gamma_{2,2}}{\sum_{k=1}^2\left(\sum_{j=2}^3 a^*_{j,k} \right)\gamma_{2,k}+1}
-\eta^*_1\gamma_{\text{th}}^1\gamma_{1,2} +\eta^*_2 \gamma_{2,2}+\mu^*_1\gamma_{2,2}+\mu^*_2\gamma_{3,2}-\mu^*_3\gamma_{3,2}-\tau^*_2
\begin{split}
    \leq 0
\end{split},
\end{dmath}
\begin{dmath}
\frac{\partial \mathcal{L}}{\partial a_{3,1}}=\frac{-\gamma_{1,1}\left(\sum_{k=1}^2a^*_{1,k}\gamma_{1,k} \right)}{\left(\sum_{k=1}^2\left[\sum_{i=1}^3a^*_{i,k}\right]\gamma_{1,k}+1\right) \left(\sum_{k=1}^2\left[\sum_{j=2}^3a^*_{j,k}\gamma_{1,k}\right]+1 \right)}
+
\frac{-\gamma_{2,1}\left(\sum_{k=1}^2a^*_{2,k}\gamma_{2,k} \right)}{\left(\sum_{k=2}^2\left[\sum_{j=2}^3a^*_{j,k}\right]\gamma_{2,k}+1\right)\left(\sum_{k=1}^2a^*_{3,k}\gamma_{2,k}+1\right)}
+
\frac{\gamma_{3,1}}{\sum_{k=1}^2a^*_{3,k}\gamma_{3,k}+1}
-
\eta^*_2 \gamma_{2,1}+\eta^*_3\gamma_{3,1}-\mu^*_3\gamma_{3,1}-\tau^*_1
\begin{split}
\leq 0    
\end{split}
,
\end{dmath} 
\begin{dmath}
    \frac{\partial \mathcal{L}}{\partial a_{3,2}}= \frac{-\gamma_{1,2}\left(\sum_{k=1}^2a^*_{1,k}\gamma_{1,k} \right)}{\left(\sum_{k=1}^2\left[\sum_{i=1}^3a^*_{i,k}\right]\gamma_{1,k}+1\right) \left(\sum_{k=1}^2\left[\sum_{j=2}^3a^*_{j,k}\gamma_{1,k}\right]+1 \right)}
+
\frac{-\gamma_{2,2}\left(\sum_{k=1}^2a^*_{2,k}\gamma_{2,k} \right)}{\left(\sum_{k=2}^2\left[\sum_{j=2}^3a^*_{j,k}\right]\gamma_{2,k}+1\right)\left(\sum_{k=1}^2a^*_{3,k}\gamma_{2,k}+1\right)}
+
\frac{\gamma_{3,2}}{\sum_{k=1}^2a^*_{3,k}\gamma_{3,k}+1}
-
\eta^*_2 \gamma_{2,2}+\eta^*_3\gamma_{3,2}-\mu^*_3\gamma_{3,2}-\tau^*_2
\begin{split}
    \leq 0
\end{split},
\end{dmath}
\begin{dmath}
\begin{split}
\frac{\partial \mathcal{L}}{\partial \eta_q}=
\sum_{k=1}^2\left(a^*_{q,k}-\gamma_{\text{th}}^q\sum_{j=q+1}^3a^*_{j,k} \right)\gamma_{q,k}-\gamma_{\text{th}}^q    
\end{split}
\geq 0,
\end{dmath}
\begin{dmath}
\begin{split}
\frac{\partial \mathcal{L}}{\partial \mu_1}=P_s-\sum_{k=1}^2\left(a^*_{1,k}-a^*_{2,k} \right)\gamma_{2,k}    
\end{split}
\geq 0,
\end{dmath}
\begin{dmath}
\begin{split}
    \frac{\partial \mathcal{L}}{\partial \mu_d}=P_s-\sum_{k=1}^2\left(a^*_{d-1,k}-\sum_{l=d}^3a^*_{l,k} \right)\gamma_{3,k}
\end{split}\geq 0,
\end{dmath}
\begin{dmath}
\begin{split}
\frac{\partial \mathcal{L}}{\partial \tau_v}=1-\sum_{m=1}^3a^*_{m,v}    
\end{split}  \geq 0,
\end{dmath}
\end{dgroup}
where $q=1, 2, 3$, $d=2, 3$ and $v=1, 2$.   
The set of points $\{a_{m,k}^*\}_{m=1,2,3}^{k=1,2}\geq 0$, $\{\eta_m^*\}_{m=1,2,3}\geq 0$, $\mu_1^*\geq 0$, $\{\mu_i^*\}_{i=1,2}\geq 0$ and $\{\tau_k^*\}_{k=1,2}\geq 0$ that satisfy conditions (\ref{ConditionsFinal}) are both the primal and dual optimal solutions for problem $\mathbf{J}$ in (\ref{OptimizationProb}).
Since the set of equations (\ref{ConditionsFinal}) are differentiable, we may utilize one of the numerical methods used for solving a set of differentiable non-linear equations such as Newton or Broyden methods.

\bibliographystyle{IEEEtran}
\bibliography{IEEEabrv,yasser}

\end{document}